\documentclass[10pt,conference,letterpaper]{IEEEtran}

\usepackage{color}
\usepackage{epsfig}
\usepackage{graphicx}
\usepackage{epic}
\usepackage{amsfonts}
\usepackage{latexsym}
\usepackage{amsmath}
\usepackage{amsthm}
\usepackage{amssymb}
\usepackage{epstopdf}

\newcommand{\suppressfig}[1]{{#1}}


\newcommand{\isitout}[1]{{#1}}
\newcommand{\isitin}[1]{}

\newtheorem{theorem}{Theorem}
\newtheorem{question}{Question}
\newtheorem{lemma}{Lemma}
\newtheorem{claim}{Claim}
\newtheorem{proposition}{Proposition}
\newtheorem{statement}{Statement}
\newtheorem{definition}{Definition}
\newtheorem{corollary}{Corollary}


\newcommand{\eps}{{\varepsilon}}

\newcommand{\x}{{x^{(n)}}}

\newcommand{\y}{{y^{(n)}}}

\newcommand{\Reps}{{\mathcal{R}^{\left(\eps\right)}_{W}}}
\newcommand{\Repsn}{{\mathcal{R}^{\left(\eps\right)}_{W,n}}}
\newcommand{\Rzero}{{\mathcal{R}^{\left(0\right)}_{W}}}
\newcommand{\Rzeron}{{\mathcal{R}^{\left(0\right)}_{W,n}}}

\def\cW{\mbox{$\cal{W}$}}
\def\cX{\mbox{$\cal{X}$}}

\def\cN{\mbox{$\cal{N}$}}

\def\cR{\mbox{$\cal{R}$}}

\def\e{\varepsilon}

\newcommand{\xn}{x^{(n)}}
\newcommand{\yn}{y^{(n)}}

\newcommand{\xnh}{{\hat{x}^{(n)}}}
\newcommand{\ynh}{{\hat{y}^{(n)}}}

\def\01{\{0,1\}}

\newcommand{\remove}[1]{}

\begin{document}

\IEEEoverridecommandlockouts

\title{Zero vs. $\e$ Error in Interference Channels}

\author{
\IEEEauthorblockN{I. Levi}
\IEEEauthorblockA{Open University of Israel\\
{\em ilia.levi@gmail.com}}
\and
\IEEEauthorblockN{D. Vilenchik }
\IEEEauthorblockA{Weizmann Institute of Science\\
{\em dan.vilenchik@weizmann.ac.il}}
\and
\IEEEauthorblockN{M. Langberg}
\IEEEauthorblockA{Open University of Israel\\
{\em mikel@openu.ac.il}}
\and
\IEEEauthorblockN{M. Effros}
\IEEEauthorblockA{Caltech\\
{\em effros@caltech.edu}}
\thanks{The work of Michael Langberg was supported in part by ISF grant 480/08, BSF grant 2010075, and NSF grant 1038578. Work done in part while Dan Vilenchik was at The Open University of Israel and Michael Langberg was at the California Institute of Technology.
}
}

\maketitle

\begin{abstract}
Traditional studies of multi-source, multi-terminal interference channels typically allow a vanishing probability of error in communication.
Motivated by the study of network coding, this work addresses the task of quantifying the loss in rate when insisting on {\em zero} error communication in the context of interference channels.
\end{abstract}

\section{Introduction}
\label{sec:intro}
In the distributed multi-source/multi-terminal network coding paradigm, independent sources wish to convey their information to a set of terminals over a given network $\cN$ via a communication scheme in which internal nodes of the network may mix (i.e., encode) the information content of received packets before forwarding them (see e.g., \cite{ACLY00,LYC03,KM03,JSCEEJT04,HoMKKESL06} and references therein).
In such a communication scheme, each terminal eventually receives a certain function of the source information and is required to decode based on the information received. For example, in the multiple-unicast scenario, there are $k$ source/terminal pairs and terminal $i$ is required to decode the information of source $i$.

One may abstractly model the end-to-end behavior of a given multiple-unicast communication scheme by a corresponding $k$-source/$k$-terminal interference channel $W: \cX^k \rightarrow \widehat{\cX}^k$. Such a channel receives as input the encoded information $x=x_1,\dots,x_k \in \cX^k$ from the $k$ independent sources and returns as output a vector $\hat{x}=\hat{x}_1,\dots,\hat{x}_k \in \widehat{\cX}^k$, where $\hat{x}_i$ is the information available at terminal node $i$.
As an example, consider the famous butterfly network in Figure~\ref{fig:butterfly}.
The channel $W$, corresponding to the well known encoding scheme presented in the figure, sets $W(x_1,x_2)=(\hat{x}_1,\hat{x}_2)$ with $\hat{x}_1 = (x_2,x_1+x_2)$ and $\hat{x}_2 = (x_1,x_1+x_2)$.

As in the butterfly example, it is common in the network coding literature to assume that the corresponding channel $W$ is {\em deterministic} (i.e., it is completely determined by the source information) and that communication is considered successful if all terminals are able to decode the information they received, no matter what source information was transmitted. We refer to the latter requirement as {\em zero error communication}.

The question whether zero error communication poses a restriction on the achievable rate has seen recent interest \cite{CG10, LE11} and has been found in \cite{LE12,EEL12} to be closely related to additional intriguing questions such as the {\em edge-removal} problem \cite{HEJ:10,JEH:11}. Relaxing the requirement of zero error communication to that of $\e>0$ error (in which one allows communication to fail with probability $\e$ over the source messages) yields the following open question \cite{CG10,LE11}.
\footnote{We note that several statements below are made informally. Formal definitions and statements follow in Section~\ref{sec:model}.}

\begin{question}
\label{q:nc}
Let $\e>0$ be an arbitrarily small constant.
In the network coding paradigm, can one obtain a strictly higher rate of communication when allowing $\e$ error in communication as opposed to zero error?
\end{question}

To better understand the price in rate of the zero-error constraint in the context of network coding, in this work we study a relaxed version of Question~\ref{q:nc}.
Specifically, we view communication via network coding as communication over deterministic interference channels and study the potential gap in rate when communicating with zero error over deterministic interference channels as opposed to $\e >0$ error.

\suppressfig{
\begin{figure}[t]
  \begin{center}
\includegraphics*[viewport=118 130 326 368, scale=0.55]{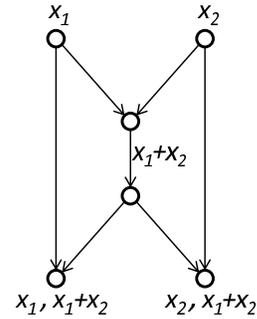}
\caption{The interference channel corresponding to the butterfly network takes input $x_1$ and $x_2$ and returns $(x_2, x_1+x_2)$ to the right terminal (which requires $x_1$) and $(x_1, x_1+x_2)$ to the left terminal (which requires $x_2$).}
    \label{fig:butterfly}
   \end{center}
\end{figure}}

\begin{question}
\label{q:ic}
Let $\e>0$ be an arbitrarily small constant.
Do there exist deterministic interference channels for which one can obtain a strictly higher rate of communication with $\e$ error as opposed to zero error?
\end{question}

A negative answer to Question~\ref{q:ic} would imply a negative answer to Question~\ref{q:nc}.
Resolving Q.\ref{q:ic} however does not necessarily resolve Q.\ref{q:nc}, since the channels that may affirmatively answer Q.\ref{q:ic}, could possibly not correspond to any given network coding topology.
Also, Q.\ref{q:ic} fixes a single network code on a given network topology, but it is not sufficient to study a single network code to resolve Q.\ref{q:nc} (as the coding scheme that achieves $\e$ error may differ from the best zero error scheme).

%

The answers to Q.\ref{q:nc} and Q.\ref{q:ic} are known to be positive when information
transmitted from different sources is {\em dependent}. That is, allowing an $\e$-error can significantly
increase the achievable rate region, as shown, for example, for the Slepian-Wolf problem in \cite{SlepianW:73}.
In the network coding setting, we assume that sources are independent.

\subsection{Our contribution}

The main focus of this work is to better understand Q.\ref{q:ic} and, in light of its connections with Q.\ref{q:nc}, to gain a better understanding of the tradeoff between $\e>0$ and zero error in network coding.

Our work focuses on the 2-source/2-terminal setting. While not resolving Q.\ref{q:ic}, we present and analyze a family of deterministic interference channels $\cW$, which we believe can act as witnesses to an affirmative answer of Q.\ref{q:ic}, with arbitrarily small values of $\e>0$.

In Sections~\ref{sec:model} and \ref{sec:pre}, we present our channel model in detail and define a refined version of Q.\ref{q:ic} alongside preliminary results and previous work.
In Section~\ref{sec:upper}, we analyze the family $\cW$ discussed above and present a positive answer to Q.\ref{q:ic} assuming a finite communication blocklength $n$.
In Section~\ref{sec:lower}, we study what we view as a natural approach to refute Q.\ref{q:ic}, and show that it does not necessarily succeed. Finally we conclude in Section~\ref{sec:conc}.

\section{Model}
\label{sec:model}

In a multiple unicast communication network, the objective is for $k$ source nodes, $s_1, s_2, \ldots,  s_{k}$, to communicate their information to $k$ corresponding terminal nodes, $t_1, t_2, \ldots,  t_{k}$ over a channel $W$.
In this work, we focus on the case of two sources and two terminals (i.e., $k=2$). A discussion regarding our model and results for larger values of $k$ appears in Section~\ref{sec:conc}.
One can model a deterministic multiple unicast communication network with blocklength $n$ by the following components.
The model presented here differs slightly in notation from that presented informally in the Introduction; namely, to simplify notation for $k=2$, encoded source information is denoted by the pair $(x,y)$ and not $(x_1,x_2)$.

\textbf{Message space}: For $i=1,2$, source $s_i$ holds a message from a set of size $M_i$. Without loss of generality, the message space can be defined as $\left[ M_i \right] = \left\{1, \ldots, M_i\right\}$.

\textbf{Encoding}: For alphabet $[Q]=[2^q]$,\footnote{For simplicity, we assume that $Q$ is an integer power of two; our results hold for any $Q \geq 2$.} and block length $n$, each source $s_i$ holds an encoding function $E_i : \left[ M_i \right] \to [Q]^n$. We denote the coded information corresponding to source $s_1$ by $x^{(n)}=(x_1,\dots,x_n) \in [Q]^n$, and that corresponding to $s_2$ by $y^{(n)}=(y_1,\dots,y_n) \in [Q]^n$.

\textbf{Network $W$}: The network $W: [Q]^2 \rightarrow [Q]^2$ is a deterministic function that takes as input elements from $[Q]^2$ and returns elements from the same alphabet.
Denoting $W$ as $(W_1,W_2)$, terminal $t_i$ receives the evaluation of $W_i : [Q]^2 \to [Q]$ on input $(x,y)\in [Q]\times[Q]$.

\textbf{Network $W^{(n)}$}:
Applying the network $W$ $n$ times (for blocklength $n$) yields the network $W^{(n)} :  [Q]^n\times[Q]^n \to  [Q]^n\times[Q]^n$ which is a deterministic function that takes as input two $n$ vectors and returns two $n$ vectors. Namely, denoting $W^{(n)}$ as $(W^{(n)}_1,W^{(n)}_2)$, the evaluation of $W^{(n)}_1 :  [Q]^n\times[Q]^n \to  [Q]^n$ on input $(\xn,\yn) \in [Q]^n\times[Q]^n$ is a vector $\xnh \in [Q]^{n}$ received at terminal $t_1$, where $\xnh_j=W_1(\xn_j,\yn_j)$. Similarly, $W^{(n)}_2 (\xn,\yn)$ is a vector $\ynh \in [Q]^{n}$ received at terminal $t_2$, where $\ynh_j=W_2(\xn_j,\yn_j)$.

\textbf{Decoding}: Each terminal $t_i$ holds a decoding function $D_i : [Q]^{n} \to \left[ M_i \right]$.

Communication with block length $n$ is successful for terminal $t_i$ and source information $(m_1,m_2)$ if for $i=1,2$,
$D_i [W^{(n)}_i(E_1(m_1),E_2(m_2))] = m_i$. We say that communication is successful with probability $1-\e$ if for source information $(m_1,m_2)$ chosen uniformly at random from $[M_1]\times[M_2]$ it holds with probability $1-\e$ that communication is successful for all terminals.
Rate $(R_1,R_2)$ is achievable with probability $1-\e$ and block length $n$ over network $W$ if for $M_i = 2^{R_i n}$ there exist encoding and decoding functions such that communication is successful with probability $1-\e$.

The $\e$-error sum capacity of a network $W$ and block length $n$ is defined to be
\begin{equation*}
\cR^{\left(\e\right)}_{W,n} = \sup_{(R_1,R_2) \in \Gamma_{n, \e}}(R_1+R_2),
\end{equation*}
where the supremum is taken over the set $\Gamma_{n, \e}$ of rate pairs $(R_1,R_2)$ that are achievable with probability $1-\e$ and block length $n$ over $W$.
The $\e$-error sum capacity of a network $W$ is defined as
\begin{equation*}
\cR^{\left(\e\right)}_{W} = \sup_n \cR^{\left(\e\right)}_{W,n}.
\end{equation*}
In particular, for $\e = 0$, we have $\cR^{\left(0\right)}_{W}$.
We here study the relationship between  $\cR^{\left(\e\right)}_{W}$ and $\cR^{\left(0\right)}_{W}$.

\isitout{Some remarks are in place.
Our model implies independence in encoding (i.e., sources cannot communicate with each other) and independence in decoding (i.e., terminals cannot communicate with each other), which is a commonly used and realistic model.
Also notice that $W$ can be defined probabilistically and not deterministically as above. We do not address probabilistic $W$ in this work, but one may prove that Q.\ref{q:ic} has a positive answer in this context.\footnote{For example, consider the channel $W$ which on input $(x_1,x_2)$ returns $(x_1,x_2)$ with probability $1-\e$ and a random pair $(x_1',x_2')$ chosen uniformly from $[Q]^2$ with probability $\e$.}}
\isitin{We note that $W$ can be defined probabilistically and not deterministically as above. We do not address probabilistic $W$ in this work, but one may show that Q.\ref{q:ic} has a positive answer in this context.\footnote{For example, consider the channel $W$ which on input $(x_1,x_2)$ returns $(x_1,x_2)$ with probability $1-\e$ and a random pair $(x_1',x_2')$ chosen uniformly from $[Q]^2$ with probability $\e$.}}

\section{Preliminaries and previous work}
\label{sec:pre}
Given a channel $W$, our main interest in this work is the relationship between  $\mathcal{R}^{\left(0\right)}_{W}$ and  $\mathcal{R}^{\left(\e\right)}_{W}$.
In words, $\mathcal{R}^{\left(0\right)}_W$ represents the achievable rate when communicating with no error at all, while $\mathcal{R}^{\left(\e\right)}_W$ represents the rate when allowing a small $\e$ probability of error. 
Specifically, we explore the plausibility of the following open statement which claims a large gap between $\mathcal{R}^{\left(\e\right)}_W$ and $\mathcal{R}^{\left(0\right)}_W$.
The statement below is a refined version of Q.\ref{q:ic} above.
\begin{statement}
\label{state:1}
Let $\e > 0$. There exists $\delta=\delta(\e) > 0$ that tends to 0 when $\e$ tends to 0 such that for every network $W$ it holds that
\begin{equation*}
\frac{\mathcal{R}^{\left(\e\right)}_{W}}{2} - \delta  \leq \mathcal{R}^{\left(0\right)}_{W}.
\end{equation*}
Moreover, for $\e>0$ and $\delta$ as above, there exists a network $W_\e$, such that
\begin{equation*}
\mathcal{R}^{\left(0\right)}_{W_\e} \leq \frac{\mathcal{R}^{\left(\e\right)}_{W_\e}}{2}+\delta.
\end{equation*}
\end{statement}
In other words, for certain networks $W$, requiring zero-error in communication may reduce the sum capacity by a factor of $2$ (or equivalently, allowing an $\e$ error may increase the sum capacity by a factor of $2$), and this $2$-factor is tight.

It is simple to obtain the first part of Statement~\ref{state:1} via a {\em time sharing} scheme.

\begin{lemma}
\label{lemma:time_sharing}
For any $n,\e>0$, and $\delta=-\frac{\log(1-\e)}{n}$, any channel $W$ satisfies
$
\frac{\Repsn}{2} - \delta   \leq \Rzeron.
$
Here, $\delta>0$ tends to $0$ as $\eps$ tends to 0 or $n$ to $\infty$.
\end{lemma}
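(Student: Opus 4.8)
The plan is to prove Lemma~\ref{lemma:time_sharing} by a direct time-sharing (equivalently, channel-splitting) argument. Fix a channel $W$, a blocklength $n$, and $\e>0$. Let $(R_1,R_2)$ be any rate pair achievable with probability $1-\e$ and blocklength $n$ over $W$, so that its sum $R_1+R_2$ approximates $\Repsn$. Write $S = R_1+R_2$. I want to exhibit, for the same channel, a \emph{zero-error} scheme of blocklength $n$ and sum rate at least $S/2 - \delta$ with $\delta = -\log(1-\e)/n$.

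\textbf{Key steps.} First I would recall what "achievable with probability $1-\e$" gives us: encoders $E_1,E_2$ and decoders $D_1,D_2$ with message sets of size $M_i = 2^{R_i n}$ such that, over a uniformly random message pair $(m_1,m_2)$, both terminals decode correctly with probability at least $1-\e$. Since the message pair is uniform over a set of size $M_1 M_2$, the number of "good" pairs $(m_1,m_2)$ on which communication succeeds for \emph{both} terminals is at least $(1-\e)M_1 M_2$. The second step is a combinatorial extraction: from a bipartite set of density $\ge 1-\e$ inside $[M_1]\times[M_2]$, pull out a combinatorial rectangle $A_1\times A_2$ with $|A_1|\cdot|A_2| \ge (1-\e) M_1 M_2$ on which \emph{every} pair is good — e.g. take $A_1$ to be the set of rows each containing at least $(1-\e)M_2$ good columns (so $|A_1|\ge$ some fraction of $M_1$), then intersect the good columns over those rows; a cleaner route is simply to note that some single row $m_1$ has a good-column set $A_2$ of size $\ge (1-\e)M_2$, and symmetrically some column has a good-row set of size $\ge(1-\e)M_1$, but to get a product structure the averaging has to be done carefully. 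I would actually take the simplest sufficient version: there is a set $A_2 \subseteq [M_2]$ with $|A_2| \ge (1-\e)M_2$ and a set $A_1\subseteq[M_1]$ with $|A_1|\ge (1-\e)M_1$ such that on $A_1\times A_2$ both terminals decode — obtained by a two-sided pigeonhole/averaging on the $\ge(1-\e)M_1M_2$ good pairs. The third step is the time-sharing construction: split the $n$ channel uses into two halves. In the first half, source $s_1$ sends its message using $E_1$ restricted to $A_1$ (recoordinatized as a message from $[|A_1|]$) while $s_2$ sends a fixed dummy message from $A_2$; terminal $t_1$ decodes via $D_1$, and this is now \emph{zero-error} because every pair in $A_1\times A_2$ is good. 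In the second half, the roles swap: $s_2$ sends a message from $A_2$, $s_1$ sends a fixed dummy from $A_1$, and $t_2$ decodes with zero error. Concatenating over blocklength $n$ (with $n/2$ uses per phase — I would handle parity of $n$ by a floor, which is absorbed into $\delta$, or just phrase the bound with $\lfloor n/2\rfloor$) gives a zero-error scheme of sum rate $\frac{1}{n}\big(\log|A_1| + \log|A_2|\big)/1 \cdot \tfrac12 \ge \tfrac12\cdot\frac{1}{n}\log\big((1-\e)M_1\big) + \tfrac12\cdot\frac{1}{n}\log\big((1-\e)M_2\big) = \tfrac12(R_1+R_2) + \frac{\log(1-\e)}{n} = \tfrac{S}{2} - \delta$. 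Taking the supremum over achievable $(R_1,R_2)$ yields $\Repsn/2 - \delta \le \Rzeron$. Finally, $\delta = -\log(1-\e)/n \to 0$ as $\e\to 0$ or $n\to\infty$, which is the claimed behavior.

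\textbf{Main obstacle.} The only real subtlety is the rectangle-extraction step: "probability $1-\e$ of joint success over a uniform product distribution" does not immediately hand us a combinatorial rectangle on which success is \emph{certain}. I expect this to be the crux of the write-up. The safe resolution is the one sketched above — average to find one heavy row and one heavy column, or more robustly observe that the good-pair set has density $\ge 1-\e$ so some row has $\ge(1-\e)M_2$ good columns and, for the product structure needed by time-sharing, we only ever use a single fixed dummy message in each phase, so we in fact only need: a single $m_1^\ast$ with good-column set $A_2$, $|A_2|\ge(1-\e)M_2$, and a single $m_2^\ast$ with good-row set $A_1$, $|A_1|\ge(1-\e)M_1$ — both of which follow from plain averaging (the expected number of good columns in a random row is $\ge(1-\e)M_2$, so some row achieves it, and symmetrically). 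Then phase one transmits over $A_1$ with $s_2$ pinned at $m_2^\ast$ (every pair $(m_1,m_2^\ast)$ with $m_1\in A_1$ is good for $t_1$), phase two transmits over $A_2$ with $s_1$ pinned at $m_1^\ast$. Everything else is bookkeeping, and the $\delta$ exactly accounts for the $\log(1-\e)$ loss from shrinking each message set by the factor $(1-\e)$.
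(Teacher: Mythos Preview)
Your time-sharing construction has a blocklength mismatch that breaks the argument as a proof of the lemma \emph{as stated}. The encoders $E_1,E_2$ and decoders $D_1,D_2$ you extract from the $\e$-error scheme operate at blocklength $n$: to guarantee that terminal $t_1$ correctly decodes $m_1\in A_1$ when source~2 is pinned at $m_2^\ast$, you must run all $n$ channel uses of $W^{(n)}(E_1(m_1),E_2(m_2^\ast))$ --- you cannot compress the blocklength-$n$ code into $n/2$ uses. So ``phase one'' already consumes $n$ uses and ``phase two'' another $n$, giving a zero-error scheme of blocklength $2n$, not $n$. Your rate computation then yields $\mathcal{R}^{(0)}_{W,2n}\ge \Repsn/2 - \delta$, which is strictly weaker than the claim $\Rzeron \ge \Repsn/2 - \delta$ (it would still suffice for the first half of Statement~\ref{state:1} after taking $\sup_n$, but not for the lemma itself).

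The paper's proof sidesteps this by dropping the second phase entirely. Assume without loss of generality $R_1\ge R_2$; by the single-row averaging you already described, there is one $m^\ast\in[M_2]$ with at least $(1-\e)M_1$ good partners $S\subseteq[M_1]$. Then $(S,\{m^\ast\})$ is a zero-error code \emph{at blocklength $n$} with sum rate at least $R_1 + \log(1-\e)/n \ge (R_1+R_2)/2 - \delta$, using only $\max(R_1,R_2)\ge(R_1+R_2)/2$. You essentially discovered this in your ``main obstacle'' paragraph (the averaging that produces $m_2^\ast$ and $A_1$) but then re-wrapped it into the two-phase scheme; running just the one phase, for whichever source has the larger rate, is all that is needed.
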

\begin{proof}
Let without loss of generality $M_i=2^{R_in}$ and $\Repsn=R_1+R_2$.
Assume that $R_1 \ge R_2$.
By definition of $\cR^{\left(\e\right)}_{W,1}$, there exists an $m^{*} \in M_2$ and a subset $S \subseteq M_1$ of size at least $(1-\e)|M_1|$ such that
$D_1 [W^{(n)}_1(E_1(m),E_2(m^*))] = m$ and $D_2 [W^{(n)}_2(E_1(m),E_2(m^*))] = m^*$ for every $m \in S$. Taking $M_1' = S$ and $M_2'=\{m^*\}$, we get a zero-error communication scheme over $W_n$ with sum rate
$\frac{1}{n}\left(\log |M_1'|+\log |M_2'|\right)=R_1+\frac{\log(1-\e)}{n}$. Since we assumed $R_1 \ge R_2$, we get
\begin{eqnarray*}
\Rzeron
& \ge &
R_1+\frac{\log(1-\e)}{n} \ge \frac{\Repsn}{2}+\frac{\log(1-\e)}{n}.
\end{eqnarray*}
\end{proof}
As a corollary of Lemma \ref{lemma:time_sharing} we get the first part of Statement \ref{state:1}: Fix $\eps$, and let $\delta^*=-\log(1-\eps)$. Lemma~\ref{lemma:time_sharing} implies that for all $n$,
$$\sup_n \left (\frac{\Repsn}{2} - \delta^* \right)  \leq  \sup_n \Rzeron.$$ Since $\delta^*$ does not depend on $n$, we can take it out of the parentheses, giving the first part of Statement \ref{state:1}.
Recall the definitions $\sup_n \Rzeron = \Rzero$ and $\sup_n \Repsn=\Repsn$.


\subsection{Previous work}

In an excellent survey, K\"{o}rner and Orlitsky \cite{KO98} discuss the problem under study, and describe a special case of a 2-user network in which $[Q]=[2] \equiv \left\{0, 1\right\}$ and $W=(W_1,W_2)$ with
\begin{eqnarray*}
W_1\left(x, y\right) = \max\left(x, y\right), \ \ \
W_2\left(x, y\right) = \min\left(x, y\right).
\end{eqnarray*}
The problem addressed in \cite{KO98}  is to find $\mathcal{R}^{\left(0\right)}_{W}$.
It is not hard to verify that $\mathcal{R}^{\left(1/4\right)}_{W}=2$.
The authors note that this problem has a combinatorial formulation, and that $\Rzero$ is conjectured (by \cite{S:89} and \cite{AS:94}) to be equal to $1$ (which matches Statement~\ref{state:1} for $\mathcal{R}^{\left(1/4\right)}_{W}=2$). However, the best known upper bound is $\Rzero \leq 1.2118$ \cite{HK:95}. A sum rate $1$ is easily achieved by using the network to transmit the information of one user only. For $n=1$, define $E_1\left(0\right) = 0, E_1\left(1\right) = 1, E_2\left(0\right) = 0, E_2\left(1\right) = 0$ and $D_1\left(x,y\right)=x$.
Using the time-sharing scheme suggested above, we can convey information to both users, one at a time, with sum rate $1$.
The above proves Claim~\ref{claim:pre}.
\begin{claim}[\cite{HK:95}]
\label{claim:pre}
There exists a binary channel $W$ such that for $\e=1/4$,
$\Rzero \leq 0.6059 \cdot \Reps.$
\end{claim}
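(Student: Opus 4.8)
The plan is to prove the claim by exhibiting one explicit channel and reading off the numerical separation from two facts about it, one elementary and one imported. The channel I would use is the binary $(\max,\min)$ channel of \cite{KO98}: $[Q]=\{0,1\}$ with $W_1(x,y)=\max(x,y)$ and $W_2(x,y)=\min(x,y)$. For this $W$ it suffices to (a) lower bound $\Reps$ at $\e=1/4$ by $2$, and (b) upper bound $\Rzero$ by $1.2118$; since $1.2118=0.6059\cdot 2$, the two bounds give $\Rzero\le 1.2118=0.6059\cdot 2\le 0.6059\cdot\Reps$, which is exactly the asserted inequality for $\e=1/4$. Note that only the lower bound $\Reps\ge 2$ is needed, not the exact value of $\Reps$.

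For (a) I would use a single channel use. With $n=1$, take $M_1=M_2=2$, let $E_1$ and $E_2$ be the identity on $\{0,1\}$, and let $D_1$ and $D_2$ be the identity as well; then $R_1=R_2=1$. On input $(x,y)\in\{0,1\}^2$, terminal $t_1$ outputs $\max(x,y)$, which equals the desired message $x$ on every pair except $(x,y)=(0,1)$, and symmetrically $t_2$ outputs $\min(x,y)=y$ except on $(x,y)=(0,1)$. Hence for a uniformly random input pair communication succeeds for both terminals on three of the four inputs, i.e.\ with probability $3/4=1-1/4$, so the rate pair $(1,1)$ is achievable with error $1/4$ and blocklength $1$. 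This gives $\Reps\ge\mathcal{R}^{(1/4)}_{W,1}\ge 2$ for $\e=1/4$. (A matching converse $\Reps\le 2$ also holds, as noted in \cite{KO98}, but it plays no role here.)

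For (b) I would simply invoke the combinatorial bound of \cite{HK:95}, which states $\Rzero\le 1.2118$ for precisely this channel, and I would not try to reprove it: this is where all the difficulty sits. The zero-error decoding requirement for the $(\max,\min)$ channel reformulates as an extremal question about families of binary vectors, with $\Rzero$ conjectured by \cite{S:89,AS:94} to equal $1$ and $1.2118$ the best known upper bound. Thus the main obstacle to a self-contained argument is the zero-error converse $\Rzero\le 1.2118$ (indeed, even $\Rzero<2$ would already require a combinatorial argument of this flavor); everything else --- the achievability scheme above and the arithmetic $1.2118/2=0.6059$ --- is elementary. The role of the claim is just to repackage the known gap between $\Rzero$ and $\Reps=2$ for this channel as a multiplicative separation with constant $0.6059$.
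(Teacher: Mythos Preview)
Your proposal is correct and mirrors the paper's own argument: both use the binary $(\max,\min)$ channel of \cite{KO98}, establish $\mathcal{R}^{(1/4)}_{W}\ge 2$ via the trivial $n=1$ identity scheme, and import $\Rzero\le 1.2118$ from \cite{HK:95} to get the ratio $0.6059$. You are slightly more explicit than the paper in writing out the $n=1$ achievability check and in noting that only the lower bound $\Reps\ge 2$ (not the exact value) is needed, but there is no substantive difference in approach.
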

Our work addresses the potential gap between $\Rzero$ and $\Reps$ for arbitrary values of $\e>0$.
\subsection{``Erasure/identity'' channels}
As we have seen, the first part of Statement \ref{state:1} is true. In this work, we explore the second part of that statement. We conjecture that it is correct, and provide evidence that supports this conjecture. To this end, we analyze the gap between $\Reps$ and $\Rzero$ on a family of channels $W$ for which $W: [Q]^2 \rightarrow \left([Q]^2\cup {(\phi,\phi)}\right)$ is either the identity function (i.e., $W(x,y)=(x,y)$) or $W$ returns an ``erasure value'' (i.e., for a new symbol $\phi \not\in [Q]$, $W(x,y)=(\phi,\phi)$). Notice that we change the model slightly by allowing our output alphabet to have an additional symbol.
We refer to such channels as {\em erasure/identity} channels.
More specifically, we consider a distribution over erasure/identity channels $W$, and study the properties of the resulting channels.
Our distribution is very natural and is parametrized by $\e$.
\begin{definition}
Let $\cW_{Q,\e}$ be the distribution over erasure/identity channels in which for every $(x,y) \in [Q]^2$ we fix $W(x,y)=(\phi,\phi)$ independently with probability $\e$; otherwise $W(x,y)=(x,y)$.
\end{definition}

In words, any {\em typical} channel $W \in \cW_{Q,\e}$ is {\em almost} the identity function.
It only deviates from the identity function on an $\e$-fraction of input values in expectation, and in such case returns the value $(\phi,\phi)$.
In addition, using Markov's inequality, it follows that with probability at least $1/2$ (over $W \in \cW_{Q,\e}$) the channel $W$ deviates from the identity on at most a $2\e$-fraction of input values.
This implies that with probability at least $1/2$  we have that $\mathcal{R}^{\left(2\e\right)}_{W} \geq \mathcal{R}^{\left(2\e\right)}_{W,1}= 2q$, which is optimal.

In light of Statement~\ref{state:1}, we ask how far $\Rzero$ is from $q$.
First of all we note that for parameters $Q$ and $\e$ in which $Q$ is small with respect to $\e$ (e.g., $\e < 1/Q$) it holds for typical $W \in \cW_{Q,\e}$ that $\Rzero$ is close to $2q$ (which does not support Statement~\ref{state:1}). This follows from the fact that in such channels there are very few input pairs that result in erasures.
Thus, for any $\e>0$, we focus on values of $Q$ which are large and satisfy $Q \geq \Omega(1/\e)$.
Secondly, we remark that finding zero error codes for $W \in \cW_{Q,\e}$ seems challenging as a standard analysis of the natural encoding scheme in which we encode the source information via an erasure code and send the codewords over the channel will not improve on the trivial sum rate $q$ for values of $Q \geq \Omega(1/\e)$.

In what follows we support Statement~\ref{state:1} by showing the existence of channels $W\in\cW_{Q,\e}$ for which on one hand
$\mathcal{R}^{\left(2\e\right)}_{W} = 2q$, while on the other $\Rzeron \le (1-\frac{1}{n})2q$.
In other words, for every fixed $n$ we establish a gap between $\Rzeron$ and $\mathcal{R}^{\left(2\e\right)}_{W}$.
Our results do not have any asymptotic significance since as $n$ grows we approach the trivial bound $\Rzeron \le 2q$.
This is stated formally in Theorem \ref{the:upper} and Corollary~\ref{the:upper} in Section~\ref{sec:upper}.
In Section~\ref{sec:lower}, we study what we view as a natural attempt (that differs from the scheme based on erasure codes discussed above) to show that $\Rzero > q$.

\section{Upper bounds assuming finite block length}
\label{sec:upper}

In this section we present an upper bound on the rate $\Rzeron$ for channels $W$ chosen from the aforementioned distribution $\cW_{Q,\e}$.
For any error value $\e>0$, we study the distribution $\cW_{Q,\e}$ for values of $Q$ which are sufficiently large and satisfy $Q= \Omega(1/\e)$.
\isitout{Posing a lower bound on $Q$ that depends on $\e$ is essential as it is not hard to see that for small $Q$ (say $Q \leq 1/\e$) ``typical'' channels $W$ in the support of $\cW_{Q,\e}$ will have $\Rzero$ which is close to $2q$.}

\begin{theorem}
\label{the:upper} For every integer $n \geq 2$, $\eps \in [0,1]$ and $\gamma >0$, let $Q=2^q$ with $q \geq \max\{\log n,\frac{4}{\gamma}\log \frac{3}{\eps}\}$.   Then with probability at least $3/4$, a random channel $W\in\cW_{Q,\e}$ satisfies
\begin{equation}\label{eq:R0vsReps}
\mathcal{R}^{\left(0\right)}_{W,n} \le 2q\left(1-\frac{1}{n}\right)(1+\gamma).
\end{equation}
Specifically, for $n=2$
\begin{equation*}
\mathcal{R}^{\left(0\right)}_{W,2} \le q(1+\gamma).
\end{equation*}
\end{theorem}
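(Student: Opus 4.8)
The plan is to first strip the general statement down to the case $n=2$ by an induction on $n$ that holds for \emph{every} fixed channel $W$, and then to do the probabilistic work only at $n=2$. For the induction, fix $W$ and a zero‑error code of block length $n\ge 3$ with codebooks $\cC_1,\cC_2\subseteq[Q]^n$ and sizes $M_1,M_2$; for $c\in[Q]$ write $\cC_i^{(c)}=\{z\in\cC_i:z_n=c\}$. Since $W$ is applied symbol‑by‑symbol, for any fixed $(c,d)\in[Q]^2$ the last received coordinate at $t_1$ (namely $W_1(c,d)$) and at $t_2$ ($W_2(c,d)$) is a \emph{constant}, so the first $n-1$ received coordinates already determine the message; together with the fact that $\pi_{[n-1]}$ is injective on each $\cC_i^{(c)}$ (the last coordinate is fixed), this shows $\big(\pi_{[n-1]}(\cC_1^{(c)}),\pi_{[n-1]}(\cC_2^{(d)})\big)$ is a zero‑error code of block length $n-1$ for $W$. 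Hence $|\cC_1^{(c)}|\cdot|\cC_2^{(d)}|\le 2^{(n-1)\mathcal{R}^{(0)}_{W,n-1}}$ for all $(c,d)$, and summing over the $Q^2$ pairs gives $M_1M_2=\sum_{c,d}|\cC_1^{(c)}||\cC_2^{(d)}|\le Q^2\,2^{(n-1)\mathcal{R}^{(0)}_{W,n-1}}$, i.e.
\[
\mathcal{R}^{(0)}_{W,n}\ \le\ \frac{2q+(n-1)\mathcal{R}^{(0)}_{W,n-1}}{n}.
\]
Feeding in $\mathcal{R}^{(0)}_{W,n-1}\le 2q(1-\tfrac1{n-1})(1+\gamma)$ reduces the desired bound to $1+(n-2)(1+\gamma)\le(n-1)(1+\gamma)$, i.e. to $1\le 1+\gamma$. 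So the whole theorem follows \emph{deterministically} from the base case $\mathcal{R}^{(0)}_{W,2}\le q(1+\gamma)$, and it suffices to establish that base case with probability $\ge 3/4$; the displayed ``$n=2$'' line is literally this base case.

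For $n=2$ I would first extract combinatorial constraints. Write $B=\{(c,d):W(c,d)=(\phi,\phi)\}$, $B_c=\{d:(c,d)\in B\}$, $a_i=|\pi_i(\cC_1)|$, $b_i=|\pi_i(\cC_2)|$. Enumerating the four possible erasure patterns $S(x,y)\subseteq\{1,2\}$ and noting that $t_1$ confuses $x\ne x'\in\cC_1$ exactly when $S(x,y)$ and $S(x',y')$ have the same ``shape'' (same set of erased coordinates, agreeing off it) for some $y,y'\in\cC_2$, one gets: (i) at most one $x\in\cC_1$ is \emph{doubly shadowed}, i.e. admits $y\in\cC_2$ with $(x_1,y_1)\in B$ and $(x_2,y_2)\in B$ (and the symmetric statement on $\cC_2$); and (ii), using (i) to simplify the pattern‑$\{1\}$ and pattern‑$\{2\}$ conditions, that for all but one $x\in\cC_1$ the relevant confusability depends only on whether $B_{x_i}$ meets $\pi_i(\cC_2)$, which yields
\[
M_1\le (a_2+1)\bigl(1+|G_1|\bigr),\qquad M_1\le (a_1+1)\bigl(1+|G_2|\bigr),\quad G_i:=\{c\in[Q]:B_c\cap\pi_i(\cC_2)=\varnothing\},
\]
together with the two symmetric inequalities for $M_2$ (with $a_i\leftrightarrow b_i$ and $\pi_i(\cC_2)\leftrightarrow\pi_i(\cC_1)$), and of course $M_1\le a_1a_2$, $M_2\le b_1b_2$, $a_i\le M_1$, $b_i\le M_2$.

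Next I would control $B$. For a fixed $T\subseteq[Q]$, $|\{c:B_c\cap T=\varnothing\}|$ is $\mathrm{Bin}\!\left(Q,(1-\e)^{|T|}\right)$; since this quantity is monotone in $T$ it is enough to control $T$ of size up to where the mean is $\Theta(1)$, and a Chernoff bound plus a union bound over such $T$ gives, with probability $\ge 3/4$, simultaneously for all $T$, $|\{c:B_c\cap T=\varnothing\}|\le 2Q(1-\e)^{|T|}+\Delta$ with $\Delta=\poly(q/\e)$. Apply this with $T$ each of $\pi_1(\cC_2),\pi_2(\cC_2),\pi_1(\cC_1),\pi_2(\cC_1)$. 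Multiplying the two bounds on $M_1$ gives $M_1^2\le a_1a_2(1+|G_1|)(1+|G_2|)$, which up to the $\Delta$‑corrections is at most $4Q^2 a_1a_2(1-\e)^{b_1+b_2}$; writing $s=\sqrt{a_1a_2}$, $t=\sqrt{b_1b_2}$ and using $a_1+a_2\ge 2s$, $b_1+b_2\ge 2t$, the symmetric estimates for $M_2$ combine to
\[
M_1M_2\ \lesssim\ Q^2\,st\,(1-\e)^{s+t}\ \le\ Q^2\max_{s,t>0}\bigl(st\,(1-\e)^{s+t}\bigr)\ =\ \frac{Q^2 e^{-2}}{(-\ln(1-\e))^2}\ \le\ \frac{Q^2}{\e^2}.
\]
The $\Delta$‑corrections only threaten to dominate when some projection of $\cC_1$ or of $\cC_2$ is large, but in that regime $|G_i|$ large forces the matching $\pi_i(\cC_2)$ to have size $O(1/\e)$, hence $M_2\le b_1b_2=O(1/\e^2)$ (and symmetrically), so in every case $M_1M_2\le Q^2\cdot\poly(1/\e)$ with the $\poly$ free of $q$. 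Finally, $q\ge\frac4\gamma\log\frac3\e$ makes $Q^{2\gamma}=2^{2\gamma q}$ exceed this $\poly(1/\e)$ factor, giving $M_1M_2\le Q^{2(1+\gamma)}$, i.e. $\mathcal{R}^{(0)}_{W,2}\le q(1+\gamma)$.

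\emph{Main obstacle.} Everything after Step 1 is the hard part, i.e. the $n=2$ analysis. Extracting the inequalities of the second paragraph requires a careful case analysis over the erasure patterns and honest bookkeeping of the ``at most one exception'' terms; and the delicate point in the combination step is to dispose of the regime in which the error terms $\Delta$ would spoil the estimate — one must exploit the dichotomy ``$G_i$ large $\Rightarrow$ $\pi_i$ small'' so that the final bound is $Q^2\cdot\poly(1/\e)$ with a $q$‑free polynomial, which is precisely what the stated lower bound on $q$ is designed to absorb into the $Q^{2\gamma}$ slack.
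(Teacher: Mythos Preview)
Your induction step is correct and is a genuinely different (and rather clean) reduction from general $n$ to $n=2$: the recursion $\mathcal{R}^{(0)}_{W,n}\le\frac{2q+(n-1)\mathcal{R}^{(0)}_{W,n-1}}{n}$ does hold for \emph{every} erasure/identity channel, and it propagates the bound $q(1+\gamma)$ exactly as you say. The paper does not do this; instead it works for all $n$ at once by (i) showing that any block-$n$ zero-error code yields a large bipartite independent set (BPIS) in the graph $G_{W,n}$ whose edges are pairs $(\x,\y)$ with at least one erased coordinate, and (ii) observing that a maximum BPIS in $G_{W,n}$ is exactly the $n$-th power of a maximum BPIS in $G_{W,1}$, which is then bounded by a single union bound over the $2^{2Q}$ pairs $(A,B)\subseteq[Q]\times[Q]$.

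The gap is in your $n=2$ base case, specifically the uniform control of $|G(T)|=|\{c:B_c\cap T=\varnothing\}|$. Your claimed bound $|G(T)|\le 2Q(1-\e)^{|T|}+\Delta$ with $\Delta=\mathrm{poly}(q/\e)$, simultaneously for all $T\subseteq[Q]$, is false: even using monotonicity to restrict to one size class, there are up to $2^Q$ choices of $T$, and for $|T|$ near $\log Q/\e$ (where the Binomial mean $Q(1-\e)^{|T|}$ is $\Theta(1)$) the deviation needed to beat a $2^Q$ union bound is $\Omega(Q/\log Q)$, not $\mathrm{poly}(\log Q/\e)$. The ``dichotomy'' paragraph does not rescue this, because it again relies on the same uniform-in-$T$ control. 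The clean fix is to notice that $(G_i,\pi_i(\cC_2))$ is itself a BPIS in $G_{W,1}$, so $|G_i|\cdot b_i$ is at most the maximum BPIS size $s$; bounding $s\le(2Q+2)/\e$ via the paper's $2^{2Q}$-pair union bound gives $|G_i|\le 3Q/(\e b_i)$, and then your multiplicative combination yields $M_1M_2\le 9Q^2/\e^2$ directly, with no $\Delta$-term at all. Note, however, that once you have this BPIS bound on $G_{W,1}$ you can skip both the induction and the $n=2$ case analysis entirely: the paper's route (BPIS in $G_{W,n}$ equals the $n$-th power) gets the theorem in one stroke.
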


We thus conclude (based on the earlier discussion) that
\begin{corollary}
\label{cor:upper}
For every integer $n \geq 2$, $\eps \in [0,1]$ and $\gamma >0$, let $Q=2^q$ with $q \geq \max\{\log n,\frac{4}{\gamma}\log \frac{3}{\eps}\}$.   Then there exist channels $W\in\cW_{Q,\e}$ such that
$\mathcal{R}^{\left(2\e\right)}_{W} = 2q$ and
$\mathcal{R}^{\left(0\right)}_{W,n} \le 2q\left(1-\frac{1}{n}\right)(1+\gamma).$
\end{corollary}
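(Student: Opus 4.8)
Corollary~\ref{cor:upper} is immediate once Theorem~\ref{the:upper} is in hand: by the Markov‑inequality observation in Section~\ref{sec:pre}, with probability at least $1/2$ a random $W\in\cW_{Q,\e}$ deviates from the identity on at most a $2\e$–fraction of inputs, hence has $\mathcal R^{(2\e)}_W=\mathcal R^{(2\e)}_{W,1}=2q$; a union bound with the probability‑$\ge 3/4$ event of Theorem~\ref{the:upper} then yields a single $W$ with both properties. So the plan is to prove Theorem~\ref{the:upper}. Fix an arbitrary zero‑error, blocklength‑$n$ code for $W$; without loss of generality the encoders $E_1,E_2$ are injective, so it suffices to bound $\log|C_1|+\log|C_2|$ by $2q(n-1)(1+\gamma)$, where $C_i\subseteq[Q]^n$ is the image of $E_i$. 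Write $S=\{(a,b)\in[Q]^2: W(a,b)=(\phi,\phi)\}$ for the (random) erasure set; under $\cW_{Q,\e}$ the events $\{(a,b)\in S\}$ are i.i.d.\ Bernoulli$(\e)$.

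The first step is the combinatorial constraint that zero error imposes. Fix a coordinate $j$ and call $\ell\subseteq[Q]^n$ a line in direction $j$ if its $Q$ elements agree on every coordinate but $j$. Take two codewords of $C_1$ on a common such line, with coordinate‑$j$ symbols $a\ne a'$, and let the second source send any fixed message whose $j$‑th symbol is $b$. Since $W$ acts coordinatewise, terminal~1's two received words agree on every coordinate except $j$, and they agree there too precisely when both get erased, i.e.\ when $(a,b)\in S$ and $(a',b)\in S$; zero error forbids this for every $b\in\pi_j(C_2)$, where $\pi_j$ denotes projection onto coordinate $j$. Hence, writing $V_\ell\subseteq[Q]$ for the set of coordinate‑$j$ symbols that $C_1$ uses on $\ell$, $V_\ell$ is an independent set of the graph $G_j$ on vertex set $[Q]$ that joins $a,a'$ whenever some column $b\in\pi_j(C_2)$ of $S$ contains both. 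As the $Q^{n-1}$ lines in direction $j$ partition $C_1$, this gives $|C_1|\le Q^{n-1}\alpha(G_j)$; symmetrically $|C_2|\le Q^{n-1}\alpha(G'_j)$ with $G'_j$ built from $\pi_j(C_1)$. Also, trivially, $|C_i|\le\prod_j|\pi_j(C_i)|$.

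The second step is the probabilistic input. I would show that, for a threshold $t_0=t_0(q,\e)$ of order $\poly(q,1/\e)$ — concretely around $q/\e$ — with probability at least $3/4$ the random $S$ satisfies $\alpha(G_B)\le t_0$ for \emph{every} $B\subseteq[Q]$ with $|B|\ge t_0$, where $G_B$ joins $a,a'$ whenever some $b\in B$ has $(a,b),(a',b)\in S$. This is a first‑moment estimate: for fixed $B$ of size $t_0$ and fixed $V$ of size $t_0+1$, the probability that no column of $B$ meets $V$ in two points is $\bigl((1-\e)^{|V|}+|V|\e(1-\e)^{|V|-1}\bigr)^{|B|}$, which becomes small enough (say below $Q^{-3|B|}$) once $|V|=\Theta(q/\e)$, so the union bound over all $B$ and all $V$ survives (the entropy cost $\binom{Q}{t_0}^2\le 2^{2qt_0}$ is dominated). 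This is where the hypothesis $q\ge\frac4\gamma\log\frac3\e$ enters: it forces $Q^\gamma=2^{q\gamma}$ to exceed $t_0$; the hypothesis $q\ge\log n$ only ensures $Q\ge n$ so that lines and single‑coordinate projections are well behaved.

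The third step combines the two. Choose $j^\star$ maximizing $|\pi_{j^\star}(C_2)|$, so $|\pi_{j^\star}(C_2)|\ge|C_2|^{1/n}$. If $|C_2|\ge t_0^{\,n}$ then $|\pi_{j^\star}(C_2)|\ge t_0$, whence $\alpha(G_{j^\star})\le t_0$ and $|C_1|\le Q^{n-1}t_0$; symmetrically, if $|C_1|\ge t_0^{\,n}$ then $|C_2|\le Q^{n-1}t_0$. Thus either both $|C_1|,|C_2|\ge t_0^{\,n}$, giving $|C_1||C_2|\le Q^{2(n-1)}t_0^2$, or one of them (say $|C_1|$) is below $t_0^{\,n}$, giving $|C_1||C_2|<t_0^{\,n}Q^{n}=(t_0Q)^n$; in both cases the product is at most $Q^{2(n-1)(1+\gamma)}$ provided $t_0\le Q^{(n-1)\gamma}$, which for $n=2$ reads $t_0\le Q^{\gamma}$ — the binding constraint, the $n\ge3$ instances having room to spare — and which is guaranteed by the hypothesis on $q$. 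Taking logarithms gives $\log|C_1|+\log|C_2|\le 2q(n-1)(1+\gamma)=2q\bigl(1-\tfrac1n\bigr)(1+\gamma)$, and for $n=2$ the displayed $\mathcal R^{(0)}_{W,2}\le q(1+\gamma)$. I expect the main obstacle to be precisely the quantitative balancing in this last step: choosing $t_0$ small enough that the union bound of step two survives across the whole stated range of $q$, yet large enough that the case analysis closes with exponent exactly $2(n-1)(1+\gamma)$ — i.e.\ verifying that the $(1+\gamma)$ slack in the theorem absorbs the $\poly(q,1/\e)$ loss and that $q\ge\frac4\gamma\log\frac3\e$ is the correct threshold for this to work.
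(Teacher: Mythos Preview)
Your derivation of the corollary from Theorem~\ref{the:upper} matches the paper's exactly: the Markov inequality gives the event $\mathcal R^{(2\e)}_W=2q$ with probability at least $1/2$, and intersecting with the probability-$\ge 3/4$ event of Theorem~\ref{the:upper} produces a channel with both properties. Your route to Theorem~\ref{the:upper} itself, though, is genuinely different. The paper works with \emph{bipartite independent sets} (BPIS) in the graph $G_{W,n}$ on $[Q]^n\cup[Q]^n$ whose edges are the pairs $(x^{(n)},y^{(n)})$ having at least one erased coordinate: Proposition~\ref{prop:reduction} shows that any rate-$r$ zero-error code yields a BPIS of size at least $2^{rn}-2^{rn/2}\bigl((1+2^q)^n-2^{nq}\bigr)$ by deleting from $C_1$ every codeword that participates in an erasure (the correction term bounds how many there can be, using zero-error to limit the number of $x^{(n)}$ with any fixed erasure pattern), and Proposition~\ref{prop:SizeOfLargestBPIS} exploits the fact that a \emph{maximal} BPIS in $G_{W,n}$ necessarily factors as a Cartesian product over coordinates, so its size is exactly the $n$-th power of the largest BPIS in $G_{W,1}$, which a single first-moment bound caps at $(2Q+2)/\e$. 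Your approach bypasses the code-to-BPIS reduction entirely by slicing $C_1$ into lines in one coordinate and invoking the independence-number constraint; the price is that the relevant graph $G_{\pi_j(C_2)}$ depends on the unknown code, which forces the union bound over all large $B\subseteq[Q]$ and the size-dichotomy of your third step. The paper's product-structure argument is cleaner quantitatively---one union bound on $G_{W,1}$ suffices, and the constant $4$ in $q\ge\frac{4}{\gamma}\log\frac{3}{\e}$ drops out without slack---while your approach is more direct in that no code-to-BPIS correction term needs to be tracked; as you correctly anticipate, the residual issue for your route is whether the specific constant $4$ survives or must be slightly enlarged.
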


The proof of Theorem \ref{the:upper} consists of two parts. The first (Proposition \ref{prop:reduction}) reduces the communication rate $\mathcal{R}^{\left(0\right)}_{W,n}$ to a \emph{bipartite independent set} (BPIS) problem in a suitably constructed graph $G_{W,n}$. The second part (Proposition \ref{prop:SizeOfLargestBPIS}) upper bounds the size of the largest BPIS in that graph.
Given the channel $W$, let $G_{W,n}$ be the bipartite graph with vertex set $[Q]^n \cup [Q]^n$ and an edge $(\x,\y)$ if there exists at least one index $i$ s.t. $W(x_i,y_i)=(\phi,\phi)$.

Given a bipartite graph $H=H(X\cup Y,E)$, a BPIS is a pair $(A,B)$, $A \subseteq X$ and $B \subseteq Y$, such that $E(H) \cap (A \times B)=\emptyset$. Here, $E(H)$ is the edge set of $H$. We define the size of the BPIS $(A,B)$ to be $|A||B|$.

\begin{proposition}\label{prop:reduction} Let $W$ be any channel from the support of $\cW_{Q,\e}$.
If $\mathcal{R}^{\left(0\right)}_{W,n} \ge r$, then  $G_{W,n}$ has a BPIS of size at least $2^{rn}-2^{rn/2}((1+2^q)^n-2^{nq})$.
\end{proposition}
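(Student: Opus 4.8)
The plan is to turn a zero-error code of sum rate $r$ over $W$ into an explicit BPIS of $G_{W,n}$, built from the two encoders' codeword sets after deleting a controlled number of ``bad'' codewords from whichever side carries fewer messages. Fix a zero-error code with message sets $[M_1],[M_2]$, encoders $E_i\colon[M_i]\to[Q]^n$ and decoders $D_i$, with $\frac1n\log(M_1M_2)\ge r$ (possible since $\mathcal{R}^{(0)}_{W,n}\ge r$). Record two facts. First, both $E_i$ are injective --- two messages sharing a codeword could not both be decoded --- so $|E_i([M_i])|=M_i$. Second, because $W$ is an erasure/identity channel, for every $(m_1,m_2)$ the output $W^{(n)}(E_1(m_1),E_2(m_2))$ agrees with $(E_1(m_1),E_2(m_2))$ on the complement of a common ``erasure set'' $S(m_1,m_2)=\{i:\ W(E_1(m_1)_i,E_2(m_2)_i)=(\phi,\phi)\}$ and equals $(\phi,\phi)$ on $S(m_1,m_2)$. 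Call $(m_1,m_2)$ \emph{good} when $S(m_1,m_2)=\emptyset$ and \emph{bad} otherwise; a bad pair is precisely one whose codewords are joined by an edge of $G_{W,n}$.

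The core of the argument bounds how many codewords can occur in bad pairs. For $\emptyset\ne S\subseteq[n]$, set $A_S=\{m_1\in[M_1]:\ S(m_1,m_2)=S\ \text{for some}\ m_2\}$. If $m_1\in A_S$ with witness $m_2$, then $\hat x:=W^{(n)}_1(E_1(m_1),E_2(m_2))$ equals $E_1(m_1)$ on $[n]\setminus S$ and is $\phi$ on $S$, while zero-error forces $D_1(\hat x)=m_1$; hence $m_1$ is a function of the restriction $E_1(m_1)|_{[n]\setminus S}$, so $m_1\mapsto E_1(m_1)|_{[n]\setminus S}$ is injective on $A_S$ and $|A_S|\le Q^{\,n-|S|}$. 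Consequently the number of $m_1\in[M_1]$ that lie in some bad pair is $\bigl|\bigcup_{\emptyset\ne S}A_S\bigr|\le\sum_{\emptyset\ne S\subseteq[n]}Q^{\,n-|S|}=(1+Q)^n-Q^n=(1+2^q)^n-2^{nq}=:N$, and the symmetric argument with $D_2$ bounds the number of $m_2\in[M_2]$ lying in some bad pair by the same $N$.

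To build the BPIS, delete from $[M_1]$ the (at most $N$) messages occurring in a bad pair and keep all of $[M_2]$: the resulting $A=E_1(\{m_1:\ m_1\ \text{in no bad pair}\})$ and $B=E_2([M_2])$ have no $G_{W,n}$-edge between them, since such an edge would display a bad pair through a ``no bad pair'' message; thus $(A,B)$ is a BPIS of size $|A||B|\ge(M_1-N)M_2$. Symmetrically, deleting bad right-messages yields a BPIS of size $\ge(M_2-N)M_1$, so $G_{W,n}$ has a BPIS of size at least $M_1M_2-N\min(M_1,M_2)\ge M_1M_2-N\sqrt{M_1M_2}$. Finally $M_1M_2\ge 2^{rn}$, and either $2^{rn/2}<N$ --- in which case the claimed bound $2^{rn}-2^{rn/2}N$ is negative and there is nothing to prove --- or $2^{rn/2}\ge N$, in which case $\sqrt{M_1M_2}\ge 2^{rn/2}\ge N$ and $x\mapsto x-N\sqrt x$ is nondecreasing on $[2^{rn},M_1M_2]$, giving $M_1M_2-N\sqrt{M_1M_2}\ge 2^{rn}-2^{rn/2}N$, which is the assertion.

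The step I expect to be the main obstacle is exactly the structural one just used. A zero-error code by itself only makes the set of \emph{good} pairs large (at least $M_1M_2-N\sqrt{M_1M_2}$, by a count like the one above), but a large family of non-edges of $G_{W,n}$ need not contain a large combinatorial rectangle, so one cannot simply take $A=E_1([M_1])$, $B=E_2([M_2])$. What makes the reduction go through is that the left-messages participating in bad pairs are trapped, erasure-pattern by erasure-pattern, inside the small sets $A_S$ of size $\le Q^{n-|S|}$ (because $D_1$ must recover $m_1$ from the erased received word), so they can be excised all at once at total cost $N$; performing the excision on the side with more messages is what holds the loss to $N\sqrt{M_1M_2}$ rather than $N\max(M_1,M_2)$.
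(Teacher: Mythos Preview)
Your proof is correct and follows essentially the same route as the paper's: bound the number of left-codewords participating in any erasure pair by $N=(1+2^q)^n-2^{nq}$ via the decodability constraint (the paper groups by erasure-cardinality $t$, you by erasure-set $S$, but the count is identical), then excise those codewords from the larger side and keep the smaller side intact. Your treatment of the final inequality---the case split on $2^{rn/2}\gtrless N$ and the monotonicity of $x\mapsto x-N\sqrt{x}$---is slightly more careful than the paper's, which tacitly works with $|X|\,|Y|=2^{rn}$ rather than $\ge 2^{rn}$.
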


\medskip

\begin{proposition}\label{prop:SizeOfLargestBPIS}
Let $W$ be a random channel chosen according to the distribution $\cW_{Q,\e}$, with the corresponding graph $G_{W,n}$. With probability at least $3/4$, the largest BPIS $(A,B)$ in $G_{W,n}$ satisfies
$$\frac{1}{n}\log |A||B| \le q+\log\left(\frac{3}{\eps}\right).$$
\end{proposition}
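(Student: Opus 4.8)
The plan is to prove the contrapositive by a union bound --- but the argument must be set up carefully, since a naive union bound over all pairs of subsets of $[Q]^n$ is hopeless: there are $2^{2Q^n}$ of them, while a fixed pair $(A,B)$ is a BPIS with probability roughly $(1-\eps)^{|A|\,|B|}$, far too large to beat that count. The idea is instead to \emph{project a large BPIS onto a single coordinate} and union bound over the mere $2^{2Q}$ pairs of subsets of $[Q]$. Write $P=\{(x,y)\in[Q]^2:W(x,y)=(\phi,\phi)\}$, so that under $\cW_{Q,\eps}$ each pair of $[Q]^2$ lies in $P$ independently with probability $\eps$. For $A,B\subseteq[Q]^n$ and $i\in[n]$, let $A_i=\{x_i:\x\in A\}$ and $B_i=\{y_i:\y\in B\}$ be the $i$-th coordinate projections (writing $\x=(x_1,\dots,x_n)$). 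I would first record two elementary facts: (i) since $A\subseteq A_1\times\cdots\times A_n$ and likewise for $B$, we have $|A|\,|B|\le\prod_{i=1}^n|A_i|\,|B_i|$, so some coordinate $i$ has $|A_i|\,|B_i|\ge(|A|\,|B|)^{1/n}$; and (ii) if $(A,B)$ is a BPIS of $G_{W,n}$ then for every $i$ the rectangle $A_i\times B_i$ avoids $P$ --- for $a\in A_i$, $b\in B_i$ pick $\x\in A$, $\y\in B$ with $x_i=a$, $y_i=b$, and since $(\x,\y)$ is a non-edge of $G_{W,n}$ we get $W(x_j,y_j)\neq(\phi,\phi)$ for all $j$, in particular $W(a,b)\neq(\phi,\phi)$.

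Combining (i) and (ii) reduces the statement to the base alphabet. If $G_{W,n}$ had a BPIS with $\frac1n\log|A|\,|B|>q+\log(3/\eps)$, equivalently $|A|\,|B|>(3Q/\eps)^n$, then by (i) some coordinate $i$ satisfies $|A_i|\,|B_i|>3Q/\eps$, and by (ii) the sets $U:=A_i$, $V:=B_i\subseteq[Q]$ satisfy $U\times V\cap P=\emptyset$ with $|U|\,|V|>3Q/\eps$. So it suffices to show that, with probability at least $3/4$, no such pair $(U,V)$ exists. For fixed $U,V$ the $|U|\,|V|$ events ``$(u,v)\notin P$'', $(u,v)\in U\times V$, are independent, so $\Pr[\,U\times V\cap P=\emptyset\,]=(1-\eps)^{|U|\,|V|}\le e^{-\eps|U|\,|V|}$, which is at most $e^{-3Q}$ once $|U|\,|V|>3Q/\eps$. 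A union bound over the at most $4^Q$ candidate pairs then bounds the failure probability by $4^Q e^{-3Q}=(4/e^3)^Q$, which is below $1/4$ already for $Q\ge1$ (and the hypothesis forces $q\ge\log n\ge1$, hence $Q\ge2$). Passing to complements gives the claimed probability $3/4$.

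I expect the projection step --- trading the intractable union bound over $[Q]^n$ for one over $[Q]$, justified by the multiplicativity $|A|\,|B|\le\prod_i|A_i|\,|B_i|$ together with fact (ii) --- to be the only real content; the probabilistic estimates ($\Pr[\,U\times V\text{ misses }P\,]=(1-\eps)^{|U|\,|V|}$, $(1-\eps)^{3Q/\eps}\le e^{-3Q}$, $4^Q e^{-3Q}<1/4$) are routine. The extreme cases $\eps\in\{0,1\}$ cost nothing: for $\eps=0$ the asserted bound is $+\infty$, and for $\eps=1$ the graph $G_{W,n}$ is complete bipartite, so its largest BPIS has an empty side and $|A|\,|B|=0$.
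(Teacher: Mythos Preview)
Your proof is correct and follows essentially the same approach as the paper: reduce to the single-coordinate graph $G_{W,1}$ via projections, then union bound over the $2^{2Q}$ pairs of subsets of $[Q]$. The paper phrases the reduction slightly differently---it argues that a \emph{maximal} BPIS in $G_{W,n}$ must be a Cartesian product $A_1\times\cdots\times A_n$ versus $B_1\times\cdots\times B_n$, giving $|A||B|=\prod_i|A_i||B_i|$ exactly---whereas you use only the inequality $|A||B|\le\prod_i|A_i||B_i|$ and pigeonhole, which is all that is needed for the upper bound; your version is marginally cleaner. One small remark: the condition $q\ge\log n$ you invoke is a hypothesis of Theorem~\ref{the:upper}, not of this proposition, but it is unnecessary anyway since $(4/e^3)^Q<1/4$ already for $Q\ge 1$ and the model assumes $Q\ge 2$.
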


Before we proceed with the proofs of Propositions \ref{prop:reduction} and \ref{prop:SizeOfLargestBPIS}, we use them to derive Theorem \ref{the:upper}.

\begin{proof} (Theorem \ref{the:upper})
We reinterpret $(1+2^q)^n-2^{qn}$ as a sum of $(n-1)$ binomial terms. One can easily verify that for $q \ge \log n$, the terms form an increasing series, whose sum is then upper bounded by $n2^{q(n-1)}=2^{\log n + q(n-1)}$.
Now suppose that $\Rzeron > r = 2q\left(1-\frac{1}{n}\right)(1+\gamma)$. Then by Proposition \ref{prop:reduction}, $G_{W,n}$ has a BPIS of size
$$s =2^{rn}-2^{\frac{rn}{2}}((1+2^q)^n-2^{nq}) \ge 2^{rn}-2^{\frac{rn}{2}+\log n + q(n-1)}.$$
Plugging in the value of $r$ and rearranging, one arrives at
$$s \ge 2^{2qn\left(1-\frac{1}{n}\right)\left(1+\gamma\right)} - 2^{2qn\left(1-\frac{1}{n}\right)\left(1+\frac{\gamma}{2}+ \frac{\log n}{qn}\right)}.$$
Since $q \ge 4/\gamma$, we have $\frac{\log n}{qn} \le \frac{\gamma}{4}$. Rearranging again we get
$$s \ge 2^{2qn\left(1-\frac{1}{n}\right)\left(1+\frac{3\gamma}{4}\right)}\cdot \left(2^{2qn\left(1-\frac{1}{n}\right)\frac{\gamma}{4}}-1\right).$$
Since $q \ge 4/\gamma$ and $n \ge 2$, the latter is at least $2^{2qn\left(1-\frac{1}{n}\right)\left(1+\frac{3\gamma}{4}\right)}$. Taking the logarithm we arrive at $$\frac{\log s}{n} \ge 2q\left(1-\frac{1}{n}\right)\left(1+\frac{3\gamma}{4}\right) \ge q+\frac{3q\gamma}{4}.$$
By our choice of $q \ge \frac{4}{\gamma}\log \frac{3}{\eps}$, the latter contradicts the upper bound stated in Propositions \ref{prop:SizeOfLargestBPIS}.
\end{proof}

\isitin{The proofs of Propositions~\ref{prop:reduction} and \ref{prop:SizeOfLargestBPIS} are deleted due to space limitations.
Both appear in the full version of this work \cite{LVLE:13}.
}

\isitout{
\begin{proof} (Proposition \ref{prop:reduction})
By the assumption $\mathcal{R}^{\left(0\right)}_{W,n} \ge r$, it follows that there exist sets $X \subseteq [Q]^n$ and $Y \subseteq [Q]^n$, corresponding to the first and second source respectively, such that $\log (|X||Y|)=rn$, and for every $(\x,\y)\in X \times Y$, $D_1 [W^{(n)}(\x,\y)] = \x$ and $D_2 [W^{(n)}(\x,\y)] = \y$. Define the set $X_i \subseteq Q$ to be $X_i = \{ x \in Q: \exists \x  \in X, x_i = x \}$, that is, $X_i$ is the projection of $X$ to the $i^{th}$ block. Similarly define $Y_i$. Our first goal is to upper bound the number of pairs $(\x,\y) \in X \times Y$ that have at least one index $i$ s.t. $W(x_i,y_i)=(\phi,\phi)$.
The key observation can be summarized as follows.
Consider any pair $(x^{(n)},y^{(n)}) \in X \times Y$ that has exactly $t$ indices $i$ for which $W(x_i,y_i)=(\phi,\phi)$ in locations $i_1,\dots,i_t$.
Due to our assumption of correct decoding, it must be the case that for any other pair $(x'^{(n)},y'^{(n)}) \in X \times Y$ that has exactly $t$ indices $i$ for which $W(x'_i,y'_i)=(\phi,\phi)$ in locations $i_1,\dots,i_t$ the projection of $x^{(n)}$ onto indices in the set $[n] \setminus \{i_1,\dots,i_t\}$ must differ from the projection of $x'^{(n)}$ onto indices in the set $[n] \setminus \{i_1,\dots,i_t\}$ (and the same for $y^{(n)}$ and $y'^{(n)}$).
Otherwise $D_1(W_1^{(n)}(x^{(n)},y^{(n)}))=D_1(W_1^{(n)}(x'^{(n)},y'^{(n)}))$.

Hence the total number of different $\x$'s that belong to a pair $(\x,\y) \in X \times Y$ with exactly $t$
failure is at most $\binom{n}{t}\cdot Q^{n-t}=\binom{n}{t}2^{(n-t)q}$. The total number of $\x$'s for $t \geq 1$ is at most
\begin{align}\label{eq:NumOfx's}
\sum_{t=1}^n & \binom{n}{t}2^{(n-t)q}  = (1+2^q)^n-2^{nq}.
\end{align}
Consider the subgraph $H$ of $G_{W,n}$ induced by $(X,Y)$, and let $H'$ be the graph obtained from $H$ by removing every $\x \in X$ that belongs to a pair $(\x,\y)$ with one or more failures. By definition, the graph $H'$ is a BPIS in $G_{W,n}$. Suppose that $|X|=2^{r_1n},|Y|=2^{r_2n}$ for $r_1 \ge r_2$, $r_1+r_2=r$. Then by $(\ref{eq:NumOfx's})$, the size of the BPIS $H'$ is at least
$$\left(2^{r_1n}-((1+2^q)^n-2^{nq})\right)\cdot 2^{r_2n} \ge 2^{rn}-2^{nr/2}((1+2^q)^n-2^{nq}).$$
\end{proof}

\begin{proof} (Proposition \ref{prop:SizeOfLargestBPIS})
We first bound the size of the largest BPIS $(A,B)$ in $G=G_{W,1}$. Let $s=|A|\cdot |B|$. The probability that $G$ has a BPIS of size $s$ is at most
$$2^Q\cdot 2^Q \cdot (1-\eps)^s = 2^{2Q + \log (1-\eps)s} \le 2^{2Q-\eps s}.$$
If $s \geq (2Q+2)/\eps$, then the above probability is smaller than $1/4$. That is, with  probability at least $3/4$, $G$ has no BPIS larger than $s$.

Next we show that the following holds: $s$ is the size of the largest BPIS in $G$ iff $s^n$ is the size of the largest BPIS in $G_{W,n}$.
One direction is trivial: If $(A,B)$ is a BPIS in $G$ of size $s$, then clearly $(A^n,B^n)$ is a BPIS in $G_{W,n}$, and its size is $s^n$.
On the other hand, let $(A',B')$ be a largest BPIS in $G_{W,n}$. Observe that since $A'$ is maximal, then by definition it must be the Cartesian product $A'_1 \times A'_2 \times \dots A'_k$, where $A'_i$ is the projection of $A'$ to the $i^{th}$ coordinate. The same is true for $B'$. Next observe that for all $i$, $(A'_i,B'_i)$ is a BPIS in $G$, or else there will be an edge in $(A',B')$. If the size of $(A',B')$ is at least $s^n$, then at least one of $(A'_i,B'_i)$ satisfies $|A'_i||B'_i| \geq s$.

To conclude, we have established that with probability at least $3/4$, $G_{W,n}$ has no BPIS of size larger than $\left(\frac{2Q+2}{\eps}\right)^n$. In this case,
$$\log \left(\frac{2Q+2}{\eps}\right)^n \leq n\log  \left(\frac{3Q}{\eps}\right) = n(\log 3 + \log Q + \log(1/\eps)).$$
Rearranging, and replacing $\log Q = \log 2^q = q$, we get that the latter equals $n\left(q+\log(3/\eps)\right)$, as required.
\end{proof}
}

\section{$\gamma$-uniform set systems}
\label{sec:lower}

In this section, we tie the existence of a certain natural combinatorial structure to zero error communication schemes. Namely, in Section~\ref{sec:gamma} we define a combinatorial criterion (called the $\gamma$-uniform criterion) on subsets of $[Q]^n\times[Q]^n$ and show that subsets satisfying this criterion yield good zero error encoding schemes for the typical deterministic interference channels $W \in \cW_{Q,\e}$.
We then study upper bounds and lower bounds on the sizes of $\gamma$-uniform sets in Section~\ref{sec:bounds}. Finally we show that the bounds obtained do not resolve the question of whether $\Rzero$ is strictly larger than $q$ (the time sharing bound) but only partially support the conjecture that $\Rzero \simeq q$.

\subsection{$\gamma$-uniform set systems and their connection to $\Rzero$}
\label{sec:gamma}

\begin{definition}
Given $x^{(n)} \in [Q]^n$ and $i \in [n]$, denote by $x_{i}$ the $i$-th coordinate of $x^{(n)}$.
A pair $(x^{(n)},y^{(n)}) \in [Q]^n \times [Q]^n$ is called $\gamma$-uniform if for each pair $(\alpha, \beta) \in [Q]^2$ it holds that
$$
(1-\gamma)\frac{n}{Q^2} \leq | \{i \in [n] \mid (x_i, y_i) = (\alpha, \beta) \} | \leq (1+\gamma)\frac{n}{Q^2}.
$$
In other words, the number of appearances of any pair $(\alpha, \beta)\in [Q]^2$ in $(x^{(n)}, y^{(n)})$ is bounded by $(1\pm \gamma)\frac{n}{Q^2}$; i.e., the {\em type} of $(x^{(n)},y^{(n)})$ is $\gamma$-far from being uniform (under the $\|\cdot\|_{\infty}$ norm).
Similarly, the subsets $A \subseteq [Q]^n, B \subseteq [Q]^n$ are called  $\gamma$-uniform if for any $x^{(n)} \in A, y^{(n)} \in B$, $(x^{(n)},y^{(n)})$ are  $\gamma$-uniform.
\end{definition}


The following theorem ties the existence of $\gamma$-uniform set systems to good zero error codes for typical channels $W$ in $\cW_{Q,\e}$. Roughly speaking, given a $\gamma$-uniform pair $A$ and $B$ one can construct a zero error code for $W$ by taking large subsets $A'$ of $A$ and $B'$ of $B$ with {\em large} minimum distance.
Here the term {\em large} depends on $\e$ and $\gamma$.
\isitin{The details of our construction and the proof of the following theorem are omitted and appear in the full version of this work \cite{LVLE:13}.}

\begin{theorem}
\label{the:code_rate_corollary}
Let $Q=2^q$. Let $A \subseteq [Q]^n, B \subseteq [Q]^n$ be  $\gamma$-uniform with
$|A| \geq Q^{n(1-\delta_1)}$ and $|B| \geq Q^{n(1-\delta_2)}.$
Let $\delta >0$ be arbitrarily small.
Consider a channel $W$ chosen from the distribution $\cW_{Q,\e}$.
With probability at least $1/2$ it holds that:
\begin{equation}\label{eq:code_rate_corollary_eq}
\mathcal{R}^{\left(0\right)}_{W,n} \geq 2q\left(1-\left(\frac{\delta_1+\delta_2}{2}+2(1+\gamma)\e+\delta\right)\right)-2.
\end{equation}
\end{theorem}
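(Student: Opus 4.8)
The plan is to prove Theorem~\ref{the:code_rate_corollary} by exhibiting, for at least half of the channels $W$ drawn from $\cW_{Q,\e}$, a zero-error code of blocklength $n$ whose two codebooks are large minimum-distance subcodes $A' \subseteq A$ and $B' \subseteq B$. The idea is that $A$ and $B$ already have (almost) full size, so one only needs to thin them into codes whose minimum Hamming distance strictly exceeds the number of symbols the channel can erase on any single transmission; the identity/erasure structure of $W$ then makes decoding immediate.

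First I would bound the erasure count uniformly over all codeword pairs. For a transmitted pair $(x^{(n)},y^{(n)}) \in A \times B$, coordinate $i$ is erased precisely when $W(x_i,y_i) = (\phi,\phi)$, an event determined by the \emph{value} $(x_i,y_i) \in [Q]^2$ alone. Hence the number of erased coordinates equals the sum of $N_{\alpha\beta} := |\{i : (x_i,y_i) = (\alpha,\beta)\}|$ over those $(\alpha,\beta)$ that $W$ erases. By $\gamma$-uniformity, $N_{\alpha\beta} \le (1+\gamma)n/Q^2$ for \emph{every} pair in $A \times B$ at once. Let $\mathcal{E}$ be the event that $W$ erases at most $2\e Q^2$ of the $Q^2$ input values; since that count is binomial with mean $\e Q^2$, Markov's inequality gives $\Pr[\mathcal{E}] \ge 1/2$. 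On $\mathcal{E}$, every transmission in $A \times B$ therefore suffers at most $2\e Q^2 \cdot (1+\gamma)n/Q^2 = 2(1+\gamma)\e n$ erasures \emph{simultaneously}. I would stress that this is exactly what $\gamma$-uniformity buys: it replaces a union bound over exponentially many codeword pairs (which would force $Q$ to be astronomically large) by a single event on $W$.

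Next, conditioning on $\mathcal{E}$ and setting $d := \lfloor 2(1+\gamma)\e n \rfloor + 1$, I would take $A' \subseteq A$ maximal subject to all pairwise Hamming distances being $\ge d$. Maximality forces the radius-$(d-1)$ Hamming balls around $A'$ to cover $A$, so $|A'| \ge |A|/V$ with $V = \sum_{j \le d-1} \binom{n}{j}(Q-1)^j \le Q^{d-1} 2^n$; pick $B' \subseteq B$ likewise. Using $A'$ and $B'$ as the codebooks ($M_1 = |A'|$, $M_2 = |B'|$), the decoding claim is: if distinct $a, a' \in A'$ (paired with some $b, b' \in B'$) produced the same received word at terminal~$1$, then at each of the $\ge d$ coordinates where $a$ and $a'$ differ the symbol must be erased under input $(a,b)$ --- otherwise terminal~$1$ would read the true symbol $a_i \ne a'_i$ there and the two received words could not coincide --- contradicting that $(a,b)$ has fewer than $d$ erasures; and because $W$ erases or passes \emph{both} output coordinates together, terminal~$2$ faces the identical erasure pattern, so the same argument with $B'$ of minimum distance $\ge d$ handles terminal~$2$. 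Cashing this out in rates, $\log_Q V \le (d-1) + n/q$ and $d-1 \le 2(1+\gamma)\e n$ give $R_1 = \frac{q}{n}\log_Q |A'| \ge q(1-\delta_1) - 2q(1+\gamma)\e - 1$ and symmetrically $R_2 \ge q(1-\delta_2) - 2q(1+\gamma)\e - 1$; summing, $\mathcal{R}^{(0)}_{W,n} \ge R_1 + R_2 \ge 2q(1 - \frac{\delta_1+\delta_2}{2} - 2(1+\gamma)\e) - 2$, which since $\delta > 0$ yields~(\ref{eq:code_rate_corollary_eq}). The slack $\delta$ is only needed to absorb lower-order terms if one prefers to extract $A', B'$ by probabilistic expurgation, or if $Q$ is only moderately large.

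I expect the main obstacle to be making the decoding argument airtight in the \emph{interference} regime, where the erasure pattern a given terminal experiences is not its own choice but is dictated by the other user's codeword. The way around it is the combination of two structural facts --- that $W$ erases the two output symbols jointly, so both terminals see the same erasure set, and that, by the $\mathcal{E}$-plus-$\gamma$-uniformity step, this set has size less than $d$ uniformly over all transmissions --- which together let a classical ``minimum distance beats erasures'' argument run at both terminals simultaneously. A secondary technical worry, namely obtaining the uniform erasure bound without paying an exponential union bound, is precisely what the first step resolves.
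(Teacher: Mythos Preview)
Your proof is correct and follows essentially the same route as the paper's: Markov to bound the number of erased input pairs by $2\e Q^2$ with probability $\ge 1/2$; $\gamma$-uniformity to convert this into a uniform bound of $2(1+\gamma)\e n$ erased coordinates over all of $A\times B$; a Gilbert--Varshamov/packing argument to extract $A'\subseteq A$, $B'\subseteq B$ of minimum distance exceeding that bound; and the identity/erasure structure to finish decoding.

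The only organizational difference is that the paper factors the middle step through an auxiliary notion it calls \emph{$(d,\e)$-diverse} (for every index set of size $dn$ the pair $(x^{(n)},y^{(n)})$ hits more than $\e Q^2$ distinct input values), proves $\gamma$-uniform $\Rightarrow$ $(d,2\e)$-diverse whenever $d>2(1+\gamma)\e$, and then shows $(d,2\e)$-diversity suffices for a zero-error code. You collapse these two steps into the single observation that the total erasure count is at most $2(1+\gamma)\e n$, which is slightly more direct and even removes the need for the $\delta$ slack (in the paper $\delta$ absorbs the strict inequality $d>2(1+\gamma)\e$). The paper's detour has the small advantage that the $(d,\e)$-diverse criterion isolates exactly what is needed for the coding argument and could in principle be met by sets that are not $\gamma$-uniform; your version is shorter and loses nothing for the theorem as stated.
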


\isitout{
To prove Theorem~\ref{the:code_rate_corollary} we will introduce an additional combinatorial criterion on sets. We refer to the additional criterion as the $(d,\e)$-diversity criterion.

\begin{definition}
A pair $(x^{(n)},y^{(n)}) \in [Q]^n \times [Q]^n$ is called $(d,\e)$-diverse if for each index set $I \subseteq [n]$ of size $dn$ it holds that $|\{ (x_j, y_j) \mid j \in I \}| > \e Q^2$. Similarly, the subsets $A \subseteq [Q]^n, B \subseteq [Q]^n$ are called $(d,\e)$-diverse if for any $x^{(n)} \in A, y^{(n)} \in B$, $(x^{(n)},y^{(n)})$ are  $(d,\e)$-diverse.
\end{definition}

We first connect $(d,\e)$-diverse set systems to good zero error codes for channels in $\cW_{Q,\e}$ (via Theorem~\ref{the:code_rate_theorem} below). We then turn to prove Theorem~\ref{the:code_rate_corollary}.

\begin{theorem}
\label{the:code_rate_theorem}
Let $Q=2^q$. Let $A \subseteq [Q]^n, B \subseteq [Q]^n$ be  $(d,2\epsilon)$-diverse with
$|A| \geq Q^{n(1-\delta_1)}$ and $|B| \geq Q^{n(1-\delta_2)}.$
Consider a channel $W$ chosen from the distribution $\cW_{Q,\e}$.
With probability at least $1/2$ it holds that
\begin{equation*}
\mathcal{R}^{\left(0\right)}_{W,n} \geq 2q\left(1-\left(\frac{\delta_1+\delta_2}{2}+d\right)\right)-2.
\end{equation*}
\end{theorem}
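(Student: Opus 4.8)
The plan is to give a direct construction of a zero-error code over $W$ starting from a $(d,2\e)$-diverse pair $(A,B)$, by passing to large subsets $A' \subseteq A$ and $B' \subseteq B$ whose codewords are pairwise $(1-d)n$-apart in Hamming distance, and then arguing that such a code tolerates the erasures introduced by a typical $W \in \cW_{Q,\e}$. First I would fix a typical channel: with probability at least $1/2$ over $W \in \cW_{Q,\e}$, the set $E = \{(\alpha,\beta) \in [Q]^2 : W(\alpha,\beta)=(\phi,\phi)\}$ has size at most $2\e Q^2$ (Markov's inequality on $|E|$, whose expectation is $\e Q^2$). Condition on this event for the rest of the proof.

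The key step is the following decoding observation. Take any $x^{(n)} \in A$ and $y^{(n)} \in B$. Let $I \subseteq [n]$ be the set of indices $i$ with $W(x_i,y_i)=(\phi,\phi)$, i.e. the coordinates that get erased. I claim $|I| < dn$: if $|I| \ge dn$, then by $(d,2\e)$-diversity of $(x^{(n)},y^{(n)})$ the restricted pairs $\{(x_i,y_i) : i \in I\}$ would contain more than $2\e Q^2$ distinct elements of $[Q]^2$, all of which lie in $E$, contradicting $|E| \le 2\e Q^2$. Hence at most $dn$ coordinates are erased. Consequently, if $A'$ and $B'$ are chosen so that any two distinct codewords of $A'$ differ in more than $dn$ coordinates (and likewise for $B'$), then for any transmitted pair the unerased coordinates of the first terminal's output still uniquely determine $x^{(n)} \in A'$, and symmetrically for $y^{(n)} \in B'$ — so decoding succeeds with zero error on $A' \times B'$.

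It remains to produce large $A',B'$ with minimum distance exceeding $dn$. Here I would use a standard greedy/packing (Gilbert–Varshamov-type) bound: the number of vectors in $[Q]^n$ within Hamming distance $dn$ of a fixed vector is at most $\binom{n}{dn}(Q-1)^{dn} \le 2^n Q^{dn}$, so one can greedily extract $A' \subseteq A$ with $|A'| \ge |A| / (2^n Q^{dn}) \ge Q^{n(1-\delta_1)} / (2^n Q^{dn})$ and similarly $|B'| \ge Q^{n(1-\delta_2)}/(2^n Q^{dn})$. Taking logarithms base $2$ and dividing by $n$, the resulting sum rate is at least
\begin{equation*}
\frac{1}{n}\log(|A'||B'|) \ge 2q\left(1 - \frac{\delta_1+\delta_2}{2} - d\right) - 2,
\end{equation*}
where the $-2$ absorbs the two $2^n$ packing losses (one bit per source). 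This gives a zero-error code for $W$ of the claimed sum rate, hence $\mathcal{R}^{\left(0\right)}_{W,n} \ge 2q(1 - (\frac{\delta_1+\delta_2}{2}+d)) - 2$ whenever $|E| \le 2\e Q^2$, which happens with probability at least $1/2$.

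The main obstacle I anticipate is getting the $(d,2\e)$-diversity hypothesis to interact cleanly with the erasure pattern: one must be careful that diversity is stated for \emph{every} index set of size $dn$, so the contrapositive argument bounding $|I|$ goes through uniformly over all codeword pairs simultaneously, and that the minimum-distance requirement ">$dn$" (strict) is exactly what is needed to recover from up to $dn$ erasures. The packing bound and the bookkeeping of the $-2$ term are routine; the conceptual heart is the equivalence "few distinct symbol-pairs on any large coordinate set" $\Leftrightarrow$ "few erasures," which is precisely what the diversity criterion is designed to encode.
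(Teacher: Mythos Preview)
Your proposal is correct and follows essentially the same approach as the paper: Markov's inequality to bound the erasure set $E$, a greedy packing/Gilbert--Varshamov argument (the paper's Lemma~\ref{lem:big_subset}) to extract large minimum-distance subcodes $A',B'$, and the $(d,2\e)$-diversity hypothesis to force a contradiction with $|E|\le 2\e Q^2$. The only cosmetic difference is that you first bound the erasure set $|I|<dn$ for every pair and then invoke the minimum-distance property, whereas the paper argues directly by contradiction from a hypothetical decoding collision $W_1^{(n)}(x^{(n)},y^{(n)})=W_1^{(n)}(x'^{(n)},y'^{(n)})$; the two orderings are logically equivalent.
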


To prove Theorem~\ref{the:code_rate_theorem}, we first require the following lemma that follows from a standard packing argument.

\begin{lemma}
\label{lem:big_subset}
Let $A \subseteq [Q]^n$. Then for any $d \in [0,1]$ there exists $A' \subseteq A$ such that $|A'| > \frac{|A|}{2^n Q^{dn}}$ and for any $x^{(n)},x^{\prime(n)} \in A'$ $h(x^{(n)},x^{\prime(n)}) > dn$, where $h : [Q]^n \times [Q]^n \to \{0, 1, \ldots, n\}$ is the Hamming distance function.
\end{lemma}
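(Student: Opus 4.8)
The plan is a greedy packing (sphere-covering) argument, standard for Gilbert--Varshamov--type bounds. Maintain a ``remaining'' set $R$, initialized to $R=A$, together with the output set $A'$, initialized to $\emptyset$. As long as $R\neq\emptyset$, pick any $\xn\in R$, move it into $A'$, and delete from $R$ every element at Hamming distance at most $dn$ from $\xn$ (in particular $\xn$ itself). Since each iteration strictly shrinks $R$, the process halts after finitely many iterations with $R=\emptyset$, and by construction any two distinct members of $A'$ are at Hamming distance $>dn$: once $\xn$ enters $A'$, every element chosen afterwards lies outside its radius-$dn$ ball, and the symmetry of $h$ takes care of the pair in both orders.

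To lower bound $|A'|$, observe that a single iteration removes from $R$ at most the number of points of $[Q]^n$ lying in a Hamming ball of radius $dn$, namely $V:=\sum_{j=0}^{\lfloor dn\rfloor}\binom{n}{j}(Q-1)^j$, a quantity independent of the center. I would bound $V$ crudely: for $j\le dn$ we have $(Q-1)^j\le Q^{j}\le Q^{dn}$, and $\sum_{j=0}^{\lfloor dn\rfloor}\binom{n}{j}\le 2^n$, so $V\le 2^nQ^{dn}$; moreover this is strict, since when $d<1$ the summation index never reaches $n$ (whence $\sum_{j=0}^{\lfloor dn\rfloor}\binom{n}{j}\le 2^n-1$), and when $d=1$ one has $V=Q^n<2^nQ^n$. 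Because the iterations collectively delete all of $A$ and each deletes fewer than $2^nQ^{dn}$ elements, the number of iterations --- which equals $|A'|$ --- obeys $|A|<|A'|\cdot 2^nQ^{dn}$, i.e.\ $|A'|>|A|/(2^nQ^{dn})$, as claimed (the case $A=\emptyset$ being vacuous).

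I do not expect a genuine obstacle here: the only points requiring any care are the volume estimate for $Q$-ary Hamming balls and extracting the strict inequality $|A'|>|A|/(2^nQ^{dn})$ rather than the weaker $\ge$, both of which follow from the observation that $V<2^nQ^{dn}$ in all parameter ranges.
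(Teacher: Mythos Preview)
Your proof is correct and takes essentially the same approach as the paper: the paper phrases the argument as finding an independent set in the graph on $A$ with edges between points at Hamming distance $\le dn$, bounding the maximum degree by $\binom{n}{dn}Q^{dn}<2^nQ^{dn}$, which is precisely the greedy packing you describe. Your explicit volume bound $V<2^nQ^{dn}$ and the paper's degree bound are two ways of writing the same estimate.
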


\begin{proof} (Lemma~\ref{lem:big_subset})
Consider a graph $G = (V, E)$ where the vertices are elements of $A$, and there is an edge between two vertices $x^{(n)}, x^{\prime(n)}$ if and only if $h(x^{(n)}, x^{\prime(n)}) \leq dn$. The maximal degree of a vertex in this graph is  $\binom{n}{dn} Q^{dn}$. Thus, the size of the independent set in $G$ is at least $\frac{|A|}{\binom{n}{dn} Q^{dn}} > \frac{|A|}{2^n Q^{dn}}$, and the vertices of this independent set satisfy the conditions on $A'$ in the lemma.
\end{proof}

\begin{proof} (Theorem~\ref{the:code_rate_theorem})
We first note that with probability at least $1/2$, $W$ chosen at random
from $\cW_{Q,\e}$ has at most $2\e Q^2$ distinct values $(x,y) \in [Q]^2$ s.t. $W(x,y)=(\phi,\phi)$. This follows from the Markov inequality.

Let $A' \subseteq A, B'  \subseteq B$ be the subsets whose existence is guaranteed by Lemma~\ref{lem:big_subset}. We claim that communication with block length $n$ over $W$ is successful on input $(x^{(n)}, y^{(n)}) \in A' \times B'$.
Let us assume the contrary. Namely, that there exist $x^{(n)}, x^{\prime(n)} \in A'$,  $y^{(n)}, y^{\prime(n)} \in B'$ such that: (a) $x^{(n)} \neq x^{\prime(n)}$ and $W^{(n)}_1(x^{(n)}, y^{(n)}) = W^{(n)}_1(x^{\prime(n)}, y^{\prime(n)})$ or (b)  $y^{(n)} \neq y^{\prime(n)}$ and $W^{(n)}_2(x^{(n)}, y^{(n)}) = W^{(n)}_2(x^{\prime(n)}, y^{\prime(n)})$. Without loss of generality, consider option (a). Since $h(x^{(n)}, x^{\prime(n)}) > dn$, there exists an index set $I \subset [n]$ of size $dn$ such that for any $i \in I$, $x_{i} \neq x^{\prime(n)}_{i}$. By our assumption $W^{(n)}_1(x^{(n)}, y^{(n)}) = W^{(n)}_1(x^{\prime(n)}, y^{\prime(n)})$, this means that for each $i \in I$, $\left(W^{(n)}_1(x^{(n)}, y^{(n)})\right)_{i} = \left(W^{(n)}_1(x^{\prime(n)}, y^{\prime(n)})\right)_{i} = \phi$. Now, as $A, B$ are $(d, 2\e)$-diverse, so are $A', B'$.  Thus,  $|\{ (x_{i}, y_{i}) \mid i \in I \}| > 2\e Q^2$ in contradiction to the fact that for $W$, $|\{(\alpha, \beta) \in [Q]^2 | W_1(\alpha, \beta) = \phi\}| \leq 2\e Q^2$. Finally, note that by using $A', B'$ we can achieve a rate of $\left(\frac{\log |A'|}{n}, \frac{\log |B'|}{n}\right)$. By Lemma~\ref{lem:big_subset}, this rate is lower bounded by:
$$
\left(\frac{\log \frac{|A|}{2^n Q^{dn}}}{n}, \frac{\log \frac{|B|}{2^n Q^{dn}}}{n}\right) \geq  \left(\frac{\log \frac{Q^{n(1-\delta_1)}}{2^n Q^{dn}}}{n}, \frac{\log \frac{Q^{n(1-\delta_2)}}{2^n Q^{dn}}}{n}\right)
$$
which is equal to:
$$
q(1-(\delta_1+d)) - 1, q(1-(\delta_2+d)) - 1
$$
and this rate yields the asserted bound on $\mathcal{R}^{\left(0\right)}_{W,n}$.
\end{proof}
}

\isitout{
We now tie $\gamma$-uniform set systems to $(d,\e)$-diverse systems.

\begin{lemma}
\label{gamma_to_balance}
 If $d > (1+\gamma)\e$ then: $(A,B)$  is $\gamma$-uniform $\Rightarrow$ $(A,B)$  is $(d,\e)$-diverse.
\end{lemma}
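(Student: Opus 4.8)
The plan is to prove the contrapositive direction directly from the definitions: assume $(A,B)$ is $\gamma$-uniform, fix arbitrary $x^{(n)} \in A$, $y^{(n)} \in B$, and fix an arbitrary index set $I \subseteq [n]$ with $|I| = dn$; I must then show $|\{(x_i,y_i) \mid i \in I\}| > \e Q^2$. The intuition is that $\gamma$-uniformity caps how many indices can carry any single symbol pair $(\alpha,\beta)$, namely at most $(1+\gamma)\frac{n}{Q^2}$ indices; so a set of $dn$ indices cannot be ``covered'' by too few distinct pairs.

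Concretely, let $S = \{(x_i,y_i) \mid i \in I\}$ and let $k = |S|$. Every index $i \in I$ contributes a pair lying in $S$, and by the upper bound in the definition of $\gamma$-uniformity, each element of $S$ is realized by at most $(1+\gamma)\frac{n}{Q^2}$ indices of $[n]$, hence by at most that many indices of $I$. Therefore
\[
dn = |I| \le \sum_{(\alpha,\beta) \in S} |\{i \in I : (x_i,y_i) = (\alpha,\beta)\}| \le k \cdot (1+\gamma)\frac{n}{Q^2},
\]
which rearranges to $k \ge \frac{d Q^2}{1+\gamma}$. Under the hypothesis $d > (1+\gamma)\e$, this gives $k > \e Q^2$, i.e., $|\{(x_i,y_i)\mid i \in I\}| > \e Q^2$, which is exactly the $(d,\e)$-diversity condition for the pair $(x^{(n)},y^{(n)})$. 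Since $x^{(n)} \in A$, $y^{(n)} \in B$ and $I$ were arbitrary, $(A,B)$ is $(d,\e)$-diverse.

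There is essentially no hard step here — the argument is a one-line double-counting estimate. The only mild subtlety worth stating carefully is the chain of reasoning ``at most $(1+\gamma)n/Q^2$ indices in $[n]$, hence at most that many in $I$,'' and the fact that $\gamma$-uniformity of $A$ and $B$ as sets means it holds for \emph{every} choice of $x^{(n)} \in A$ and $y^{(n)} \in B$, so the conclusion transfers to the set-level diversity statement. I would also note in passing that one only uses the upper bound in the $\gamma$-uniform definition; the lower bound $(1-\gamma)n/Q^2$ is not needed for this implication.
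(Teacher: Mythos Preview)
Your proof is correct and is essentially identical to the paper's own argument: both fix a $\gamma$-uniform pair $(x^{(n)},y^{(n)})$ and an index set $I$ of size $dn$, use the upper bound $(1+\gamma)n/Q^2$ on the multiplicity of any pair $(\alpha,\beta)$ to conclude $|\{(x_i,y_i)\mid i\in I\}|\ge \frac{d}{1+\gamma}Q^2$, and then apply $d>(1+\gamma)\e$. Your additional remarks (that only the upper bound in the $\gamma$-uniform definition is used, and the passage from pairwise to set-level diversity) are correct observations that the paper leaves implicit.
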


\begin{proof}
Let  $(x^{(n)},y^{(n)}) \in [Q]^n \times [Q]^n$ be $\gamma$-uniform and $I \subseteq [n]$ some index set of size $dn$. A pair $(\alpha, \beta) \in [Q]^2$ can appear at most $(1+\gamma)\frac{n}{Q^2}$ times in $(x^{(n)}, y^{(n)})$. Particularly, $|\{i \mid (x_i, y_i) = (\alpha, \beta), i \in I\}| \leq (1+\gamma)\frac{n}{Q^2}$. This means that $|\{(x_i, y_i) \mid i \in I\}| \geq \frac{dn}{(1+\gamma)\frac{n}{Q^2}} = \frac{d}{1 + \gamma} Q^2$. But if $d > (1+\gamma)\e$ then $\frac{d}{1 + \gamma} Q^2 > \e Q^2$, and thus $(x^{(n)}, y^{(n)})$ is $(d, \e)$-diverse.
\end{proof}

Finally, we conclude with the proof of Theorem~\ref{the:code_rate_corollary}:
\begin{proof} (Theorem~\ref{the:code_rate_corollary})
The proof of Theorem~\ref{the:code_rate_corollary} follows directly by combining Lemma~\ref{gamma_to_balance} with Theorem~\ref{the:code_rate_theorem}.
\end{proof}
}

\subsection{Upper and Lower bounds on $\gamma$-uniform set systems}
\label{sec:bounds}

The previous section presented a scheme  to construct codes for channels $W$ chosen at random from the distribution $\cW_{Q,\e}$ based on $\gamma$-uniform set systems. We now attempt to better understand the parameters for which such set systems exist. The following lemmas present both upper and lower bounds on the size of $\gamma$-uniform set systems. We then elaborate on the implication of our bounds on Theorem~\ref{the:code_rate_corollary}.
\isitin{The proofs of our claims are omitted due to space limitations and can be found in \cite{LVLE:13}.}


\begin{lemma}
\label{lem:lower_gamma_2}
Let $0 < \gamma < 2$. Let $n$ be divisible by 4. There exists a $\gamma$-uniform pair $A \subseteq [2]^n, B \subseteq [2]^n$ such that:
\begin{equation*}
|A||B| > \frac{2}{n (n+1)} \cdot 2^{\left(1 + H\left(\frac{\gamma}{4}\right)\right)n}.
\end{equation*}
where $H$ is the binary entropy function.
\end{lemma}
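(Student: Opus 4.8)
The plan is to build the $\gamma$-uniform pair $(A,B)$ in the binary case $Q=2$ by an explicit combinatorial construction that controls, for every $x^{(n)}\in A$ and $y^{(n)}\in B$, the joint type of the pair over the four symbols $(0,0),(0,1),(1,0),(1,1)$. Since $Q=2$, $\gamma$-uniformity demands that each of the four pairs appear $(1\pm\gamma)\tfrac{n}{4}$ times. The cleanest way to force this simultaneously for all pairs $A\times B$ is to fix the \emph{individual} Hamming weights of the vectors and to fix the sizes of the pairwise intersections of their supports. Concretely, I would take $A$ to be a suitable set of vectors of weight exactly $n/2$ and $B$ likewise, and then observe that for $x,y$ both of weight $n/2$, the four counts are $|\{i:x_i=y_i=1\}|=k$, $|\{i:x_i=1,y_i=0\}|=n/2-k$, $|\{i:x_i=0,y_i=1\}|=n/2-k$, $|\{i:x_i=y_i=0\}|=n/2+k-... \,$ wait, $=n-(n/2)-(n/2)+k=k$; so all four counts are determined by the single overlap parameter $k=|\mathrm{supp}(x)\cap\mathrm{supp}(y)|$, and they equal $k,\,n/2-k,\,n/2-k,\,k$. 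Hence $\gamma$-uniformity is \emph{exactly} the condition $(1-\gamma)\tfrac n4\le k\le (1+\gamma)\tfrac n4$, i.e. $|k-n/4|\le \gamma n/4$.

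So the task reduces to: find large $A,B\subseteq\binom{[n]}{n/2}$ (identifying weight-$n/2$ vectors with their supports) such that every $S\in A$ and $T\in B$ satisfy $|S\cap T|$ within $\gamma n/4$ of $n/4$. The second step is to reduce further to a one-sided problem: partition $[n]$ into two halves $[n]=L\cup R$ with $|L|=|R|=n/2$, and let $A=B=\{S:|S\cap L|=n/4,\ |S\cap R|=n/4\}$, i.e. all sets that are balanced across the partition with exactly $n/4$ elements in each half. For $S,T$ in this family, write $a=|S\cap T\cap L|$ and $b=|S\cap T\cap R|$; then $|S\cap T|=a+b$ where $a,b\in\{0,\dots,n/4\}$, and a crude bound does not immediately give $|a+b-n/4|\le\gamma n/4$. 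To control the overlap I would instead randomize: take $A=B$ to be \emph{all} weight-$(n/2)$ subsets of a fixed $(n/2+\gamma n/4)$-element... this is getting complicated, so let me instead use the intended route suggested by the bound $2^{(1+H(\gamma/4))n}$: the factor $2^n$ is $|\binom{[n]}{n/2}|$ up to polynomial factors, and $2^{H(\gamma/4)n}$ looks like $\binom{n}{\gamma n/4}$. This strongly suggests taking $A$ to be \emph{all} of $\binom{[n]}{n/2}$ and $B$ a \emph{single cleverly chosen large structured family}, or vice versa — but $\gamma$-uniformity is symmetric and two-sided, so more plausibly one fixes a reference vector-type and takes translates.

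Here is the construction I would actually push through. Fix the all-zeros-friendly reference: let $A=B$ consist of all $x\in\{0,1\}^n$ obtained as follows. Split $[n]$ into $n/4$ consecutive blocks of size $4$. In each block, $x$ restricted to that block is required to have weight exactly $2$ — there are $\binom42=6$ choices per block, giving $|A|=6^{n/4}$, which is $2^{(\log_2 6)n/4}\approx 2^{0.646 n}$; that is \emph{smaller} than $2^{(1+H(\gamma/4))n}$ for small $\gamma$, so this particular blocking is too rigid. The right amount of slack is to \emph{not} fix the weight per small block but to allow the weight to vary by $\pm$ a controlled amount, which is exactly what contributes the $H(\gamma/4)$ term. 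Precisely: require $x$ to have total weight in $[\,n/2-\gamma n/8,\ n/2+\gamma n/8\,]$ — no, the interaction term between two such vectors is what must be bounded, and a direct product structure is what makes it tractable. I would therefore set up $A=B=\{x\circ x': x\in\text{weight-}(n/2)\text{ vectors on }[n/2],\ x'\text{ arbitrary-ish on the other }n/2\}$ and compute the joint type as a sum of the deterministic contribution from the first half (forced near $n/8$ each) plus a contribution from the second half that is made $\gamma$-controlled by a \emph{second moment / concentration} argument over a random choice, then extract by averaging a subset of size at least $\frac{2}{n(n+1)}$ times the ambient count (the polynomial denominator $n(n+1)$ is precisely the number of possible (weight, overlap)-pairs, i.e. a pigeonhole over types). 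The main obstacle — and the step I expect to spend the most care on — is getting the overlap $|S\cap T|$ to be simultaneously within $\gamma n/4$ of $n/4$ for \emph{all} pairs, not just most; I expect to resolve this by a pigeonhole on joint types: among the $\le (n+1)^2/2$ possible pairs $(|S|,|S\cap T\text{-pattern}|)$-classes, or more carefully among the $\le n(n+1)$ relevant type classes, one class must contain at least a $\frac{2}{n(n+1)}$-fraction of a large base set such as all weight-$(n/2)$ vectors intersected appropriately, and within a single joint-type class the $\gamma$-uniform condition holds automatically provided that type is itself close to uniform — which forces the base set to be the weight-$n/2$ slice (size $\ge 2^n/(n+1)$) further restricted so its self-overlaps concentrate, yielding exactly the claimed $\frac{2}{n(n+1)}2^{(1+H(\gamma/4))n}$ after accounting for the $2^{H(\gamma/4)n}$ worth of overlap-slack we are allowed.
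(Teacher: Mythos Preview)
Your opening reduction is correct and clean: restricting to weight-$n/2$ vectors, $\gamma$-uniformity of $(x,y)$ is equivalent to $|\,|\mathrm{supp}(x)\cap\mathrm{supp}(y)|-n/4\,|\le \gamma n/4$. But from that point on the proposal does not converge, and the final pigeonhole sketch has a real gap. Pigeonholing over ``joint type classes'' of pairs $(x,y)$ yields a large \emph{set of pairs} sharing a type; it does not yield a \emph{product set} $A\times B$. Likewise, pigeonholing over individual types (weights) just hands you the middle slice $\binom{[n]}{n/2}$, which does nothing to control pairwise overlaps. So the step ``within a single joint-type class the $\gamma$-uniform condition holds automatically'' is exactly the place where the argument needs a construction and does not have one.

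The idea you explicitly discard is the one the paper uses. You write that an asymmetric choice (one side all of the middle slice, the other a structured family) is implausible because ``$\gamma$-uniformity is symmetric and two-sided''; but the lemma only asks for $|A|\cdot|B|$ to be large, not for $|A|$ and $|B|$ to be comparable. The paper takes $a=1^{n/2}2^{n/2}$ and $\bar a=2^{n/2}1^{n/2}$, sets
\[
B=\Bigl\{\,y:\ \#\{i\le n/2: y_i=1\}=\#\{i> n/2: y_i=1\}=\tfrac n4\,\Bigr\},\qquad
A=\text{Hamming balls of radius }\tfrac{\gamma n}{4}\text{ around }a\text{ and }\bar a.
\]
For every $y\in B$ the pair $(a,y)$ is exactly $0$-uniform, and flipping at most $\gamma n/4$ coordinates of $a$ perturbs each of the four counts by at most $\gamma n/4$, so $(A,B)$ is $\gamma$-uniform. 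Then $|B|=\binom{n/2}{n/4}^2\ge 2^n/n$ and $|A|\ge 2\binom{n}{\gamma n/4}\ge \tfrac{2}{n+1}2^{H(\gamma/4)n}$, which multiplies to the stated bound with exactly the $\tfrac{2}{n(n+1)}$ loss. In short: the $2^n$ factor comes from $B$, the $2^{H(\gamma/4)n}$ factor comes from $A$, and the whole difficulty you were fighting (making all pairwise overlaps behave) disappears because $A$ is anchored at two fixed strings.
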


\isitout{
\begin{proof}
Define $a^{(n)}, \bar{a}^{(n)} \in [2]^n$ to be:
\begin{eqnarray*}
a_i =
\begin{cases}
1 & \text{ if } i \leq \frac{n}{2} \\
2 & \text{ otherwise } \\
\end{cases} \\
\bar{a}_i =
\begin{cases}
1 & \text{ if } i > \frac{n}{2} \\
2 & \text{ otherwise } \\
\end{cases} \\
\end{eqnarray*}
Namely $a^{(n)}$ has the form $1^{\frac{n}{2}}2^{\frac{n}{2}}$ and $\bar{a}^{(n)}$ is its bitwise inverse and has the form $2^{\frac{n}{2}}1^{\frac{n}{2}}$.
Consider the sets  $A' \subseteq [2]^n, B \subseteq [2]^n$
where $A' = \{ a^{(n)}, \bar{a}^{(n)}\}$ and $B$ is the maximal set such that $(A',B)$ is zero-uniform.
It is easy to see that
\begin{equation*}
|A'||B| = 2 \cdot \binom{\frac{n}{2}}{\frac{n}{4}}^2
\end{equation*}
since such $B$ can be obtained by selecting $y^{(n)}$-s that have exactly $\frac{n}{4}$ ones in the range $y_1 \ldots y_{\frac{n}{2}}$ and exactly $\frac{n}{4}$ ones in the range $y_{\frac{n}{2}+1} \ldots y_{n}$.

Let $A = \{ x^{(n)} \mid \min(d(x^{(n)}, a^{(n)}), d(x^{(n)}, \bar{a}^{(n)})) \leq \frac{\gamma n}{4}\}$, where $d$ is the Hamming distance.
$(A, B)$ is $\gamma$-uniform, and since $\gamma < 2$:
\begin{equation*}
|A||B| = \left( 2 \sum_{i \leq \frac{\gamma n}{4}} \binom{n}{i} \right) \cdot \binom{\frac{n}{2}}{\frac{n}{4}}^2
\end{equation*}
Using the lower bounds (see, e.g., \cite{KR00} for the first one):
\begin{eqnarray*}
\binom{\frac{n}{2}}{\frac{n}{4}}^2 \geq \frac{2^n}{n} \\
\sum_{i \leq \frac{\gamma n}{4}} \binom{n}{i} > \binom{n}{\frac{\gamma n}{4}} \geq \frac{2^{H(\frac{\gamma}{4})n}}{n+1}
\end{eqnarray*}
We obtain the bound of Lemma~\ref{lem:lower_gamma_2}:
\begin{equation*}
|A||B| > \frac{2}{n (n+1)} \cdot 2^{(1 + H(\frac{\gamma}{4}))n}.
\end{equation*}

\end{proof}
}

\begin{lemma}
\label{lem:lower_gamma_Q}
Let  $Q \geq 3$ and $n \geq {Q^3}$ such that $n$ is divisible by $Q^2$.
For $\gamma \leq Q^2$ there exists a $\gamma$-uniform pair $A \subseteq [Q]^n, B \subseteq [Q]^n$ such that:
\begin{equation*}
|A||B| \geq \left(\frac{1}{n}\right)^{\frac{Q^2}{2}} \cdot 2^{H\left(\frac{\gamma}{Q^2}\right) n} \cdot (Q-1)^{\frac{\gamma n}{Q^2}} \cdot Q^n.
\end{equation*}
\end{lemma}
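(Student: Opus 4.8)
The plan is to exhibit an explicit $\gamma$-uniform pair $(A,B)$ built around a single cleverly chosen ``base point'' $a^{(n)}\in[Q]^n$: $B$ will consist of all vectors whose pairing with $a^{(n)}$ is perfectly uniform (i.e. $0$-uniform), and $A$ will be a Hamming ball around $a^{(n)}$, so that replacing $a^{(n)}$ by a nearby point perturbs each of the $Q^2$ pair-counts only slightly.

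Concretely, since $Q^2\mid n$, partition $[n]$ into $Q$ consecutive blocks $I_1,\dots,I_Q$ each of size $n/Q$, and let $a^{(n)}$ be the vector equal to the symbol $j$ throughout block $I_j$. Let $B$ be the set of all $y^{(n)}\in[Q]^n$ for which, inside each block $I_j$, every symbol of $[Q]$ occurs exactly $n/Q^2$ times, and let $A$ be the Hamming ball of radius $R:=\lfloor \gamma n/Q^2\rfloor$ around $a^{(n)}$. First I would check that $(a^{(n)},y^{(n)})$ is $0$-uniform for every $y^{(n)}\in B$: on block $I_\alpha$ the first coordinate is constantly $\alpha$ while the second takes the value $\beta$ exactly $n/Q^2$ times, so the pair $(\alpha,\beta)$ occurs exactly $n/Q^2$ times over all of $[n]$. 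Next I would observe that replacing $a^{(n)}$ by any $x^{(n)}$ at Hamming distance $\le R$ changes each of the $Q^2$ pair-counts by at most $R$ (each differing coordinate lowers one count and raises another by $1$), so every count lies in $[(1-\gamma)n/Q^2,(1+\gamma)n/Q^2]$; hence $(A,B)$ is $\gamma$-uniform. The hypothesis $\gamma\le Q^2$ is exactly what guarantees $R\le n$.

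It then remains to lower bound $|A|\cdot|B|$. For $A$, the ball of radius $R$ has $|A|\ge\binom{n}{R}(Q-1)^R$, and the standard entropy estimate $\binom{n}{R}\ge 2^{nH(R/n)}/(n+1)$ gives $|A|\ge 2^{H(\gamma/Q^2)n}(Q-1)^{\gamma n/Q^2}/\mathrm{poly}(n)$. For $B$, each of the $Q$ blocks independently contributes a central multinomial coefficient, so $|B|=\binom{n/Q}{\,n/Q^2,\dots,n/Q^2\,}^{Q}$. Here the crude bound $\binom{m}{m/Q,\dots,m/Q}\ge Q^m/(m+1)^{Q-1}$ is too lossy: its $Q$-th power carries a factor $(m+1)^{Q(Q-1)}$, which swamps the target $(1/n)^{Q^2/2}$. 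Instead I would apply Stirling's formula (using $m=n/Q\ge Q^2$, which follows from $n\ge Q^3$, to control the error term) to get $\binom{m}{m/Q,\dots,m/Q}\ge c_Q\, Q^m/m^{(Q-1)/2}$ with an explicit $Q$-dependent constant $c_Q$; raising to the $Q$-th power and substituting $m=n/Q$ yields $|B|\ge c_Q'\, Q^n/n^{Q(Q-1)/2}$.

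Multiplying the two estimates gives $|A||B|\ge 2^{H(\gamma/Q^2)n}(Q-1)^{\gamma n/Q^2}Q^n$ times a factor of order $c_Q''/n^{\,Q(Q-1)/2+1}$. Since $Q(Q-1)/2+1\le Q^2/2$ for all $Q\ge 2$ — strictly, with slack $n^{Q/2-1}$, for $Q\ge 3$ — the polynomial loss is smaller than $n^{Q^2/2}$, and the remaining $Q$-dependent constants are absorbed using $n\ge Q^3$; this gives $|A||B|\ge (1/n)^{Q^2/2}\cdot 2^{H(\gamma/Q^2)n}(Q-1)^{\gamma n/Q^2}Q^n$, as claimed. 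I expect the main obstacle to be exactly this bookkeeping: producing a clean enough lower bound on the central multinomial coefficient, with its constant under control, so that after taking the $Q$-th power and combining with the (much smaller) polynomial loss coming from $A$, everything still fits under the generous but finite budget $(1/n)^{Q^2/2}$; the construction and the verification of $\gamma$-uniformity are routine by comparison.
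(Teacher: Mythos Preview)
Your proposal is correct and essentially identical to the paper's proof: the same base vector $a^{(n)}=1^{n/Q}2^{n/Q}\cdots Q^{n/Q}$, the same $B$ (the paper writes it as the maximal $0$-uniform partner of $\{a^{(n)}\}$ and expresses $|B|$ as a telescoping product of binomials, but it is exactly your central-multinomial description), the same Hamming ball $A$, and the same Stirling estimate on the multinomial coefficient followed by the same polynomial bookkeeping against the budget $(1/n)^{Q^2/2}$. Even your anticipated ``main obstacle'' matches where the paper spends its effort.
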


\isitout{
\begin{proof}
The technique of this proof is similar to the one used in Lemma~\ref{lem:lower_gamma_2}.
Define $a^{(n)} \in [Q]^n$ to be:
$$
a_i = \left\lceil \frac{Qi}{n} \right\rceil  \text { for all } 1 \leq i \leq n
$$
Namely $a^{(n)}$ has the form $1^{\frac{n}{Q}}2^{\frac{n}{Q}} \ldots Q^{\frac{n}{Q}}$.
Consider the sets  $A' \subseteq [Q]^n, B \subseteq [Q]^n$
where $A' = \{ a^{(n)} \}$ and $B$ is the maximal set such that $(A', B)$ is zero-uniform.
It holds that
$$
|A'||B| = |B| = \left(\prod_{i=1}^{Q}\binom{i\cdot\frac{n}{Q^2}}{\frac{n}{Q^2}}\right)^Q
$$

Let $A = \{ x^{(n)} \mid d(x^{(n)}, a^{(n)}) \leq \frac{\gamma n}{Q^2}\}$, where $d$ is the Hamming distance.
$(A, B)$ is $\gamma$-uniform, and
\begin{equation*}
|A||B| = \left(\sum_{i \leq \frac{\gamma n}{Q^2}} \left( \binom{n}{i} (Q-1)^{i} \right) \right) \cdot \left(\prod_{i=1}^{Q}\binom{i\cdot\frac{n}{Q^2}}{\frac{n}{Q^2}}\right)^Q
\end{equation*}

Notice that:
\begin{equation*}
\prod_{i=1}^{Q}\binom{i\cdot\frac{n}{Q^2}}{\frac{n}{Q^2}} = \frac{\left(\frac{n}{Q}\right)!}{\left(\left(\frac{n}{Q^2}\right)!\right)^Q}
\end{equation*}

To evaluate this expression we can use the following bounds on the factorial that result from Stirling's formula \cite{ROB55}:
\begin{equation*}
\sqrt{2 \pi n}\left(\frac{n}{e}\right)^n e^{\frac{1}{12n+1}} < n! < \sqrt{2 \pi n}\left(\frac{n}{e}\right)^n e^{\frac{1}{12n}}
\end{equation*}

Which means that:
\begin{eqnarray*}
\left(\frac{n}{Q}\right)! > \sqrt{2 \pi \left(\frac{n}{Q}\right)}\left(\frac{\left(\frac{n}{Q}\right)}{e}\right)^{\left(\frac{n}{Q}\right)} e^{\frac{1}{12\left(\frac{n}{Q}\right)+1}} \\
\left(\frac{n}{Q^2}\right)! < \sqrt{2 \pi \left(\frac{n}{Q^2}\right)}\left(\frac{\left(\frac{n}{Q^2}\right)}{e}\right)^{\left(\frac{n}{Q^2}\right)} e^{\frac{1}{12\left(\frac{n}{Q^2}\right)}}
\end{eqnarray*}

This provides the following upper bound:
\begin{eqnarray*}
 \frac{\left(\frac{n}{Q}\right)!}{\left(\left(\frac{n}{Q^2}\right)!\right)^Q} > \frac{Q^{Q-\frac{1}{2}}}{(2 \pi n)^{\frac{Q}{2} - \frac{1}{2}}} \cdot Q^{\frac{n}{Q}} \cdot e^{\frac{Q}{12n + Q} - \frac{Q^3}{12n}}  >  \\
\frac{Q^{Q-\frac{1}{2}}}{(2 \pi n)^{\frac{Q}{2} - \frac{1}{2}}} \cdot Q^{\frac{n}{Q}} \cdot e^{- \frac{Q^3}{12n}}
\end{eqnarray*}

Which for $n \geq Q^3$ is greater than:
\begin{eqnarray*}
\frac{Q^{Q-\frac{1}{2}}}{(2 \pi n)^{\frac{Q}{2} - \frac{1}{2}}} \cdot Q^{\frac{n}{Q}} \cdot e^{- \frac{1}{12}} > e^{- \frac{1}{12}} \cdot \left( \frac{Q^2}{2 \pi n} \right)^{\frac{Q}{2}-\frac{1}{2}}\cdot Q^{\frac{n}{Q}}
\end{eqnarray*}

And thus:
\begin{eqnarray*}
|A||B|  = \left(\sum_{i \leq \frac{\gamma n}{Q^2}} \left( \binom{n}{i} (Q-1)^{i} \right) \right) \cdot \left(\prod_{i=1}^{Q}\binom{i\cdot\frac{n}{Q^2}}{\frac{n}{Q^2}}\right)^Q \\
 >  \binom{n}{\frac{\gamma n}{Q^2}} (Q-1)^{\frac{\gamma n}{Q^2}} \cdot e^{- \frac{Q}{12}} \cdot \left( \frac{Q^2}{2 \pi n} \right)^{\frac{Q(Q-1)}{2}}\cdot Q^n \\
>  \frac{2^{H\left(\frac{\gamma}{Q^2}\right) n}}{n+1} \cdot e^{- \frac{Q}{12}} \cdot \left( \frac{Q^2}{2 \pi n} \right)^{\frac{Q(Q-1)}{2}} \cdot (Q-1)^{\frac{\gamma n}{Q^2}} \cdot Q^n \\
 >  \left(\frac{1}{n}\right)^{\frac{Q^2}{2}} \cdot 2^{H\left(\frac{\gamma}{Q^2}\right) n} \cdot (Q-1)^{\frac{\gamma n}{Q^2}} \cdot Q^n
\end{eqnarray*}

\end{proof}
}


\begin{lemma}
\label{lem:upper_gamma_Q}
If $A \subseteq [Q]^n, B \subseteq [Q]^n$ are $\gamma$-uniform with $Q = 2^q$, $\gamma \leq 2$, then
$$
|A||B| \leq 2^{nq\left(1+\frac{\gamma}{2}+H\left(\frac{\gamma}{2}\right)\right)}.
$$
\end{lemma}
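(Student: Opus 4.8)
The plan is to recast the bound as a statement about the ranks of two families of $0$/$1$ vectors. For $z\in[Q]^n$ and $a\in[Q]$ write $\chi_a(z)\in\{0,1\}^n$ for the indicator vector of $\{i\in[n]:z_i=a\}$; then a pair $(x,y)$ being $\gamma$-uniform is exactly the condition $\langle\chi_a(x),\chi_b(y)\rangle\in[(1-\gamma)n/Q^2,(1+\gamma)n/Q^2]$ for all $a,b\in[Q]$. If $A=\emptyset$ or $B=\emptyset$ the claim is vacuous, so fix $x^{\ast}\in A$ and $y^{\ast}\in B$; note that $\gamma$-uniformity forces the marginal type of every $x\in A$ (and of every $y\in B$) to be $\gamma$-close to uniform, so $\|\chi_a(x)\|^2=|\{i:x_i=a\}|\in[(1-\gamma)n/Q,(1+\gamma)n/Q]$.

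Let $U_A=\mathrm{span}\{\chi_a(x):x\in A,\ a\in[Q]\}\subseteq\mathbb{R}^n$ and $U_B=\mathrm{span}\{\chi_b(y):y\in B,\ b\in[Q]\}$, with $d_A=\dim U_A$, $d_B=\dim U_B$. The first ingredient is \emph{size from rank}: $|A|\le Q^{d_A}$ and $|B|\le Q^{d_B}$. To see this, pick coordinates $i_1,\dots,i_{d_A}\in[n]$ on which $U_A$ projects injectively; every $\chi_a(x)$ lies in $U_A$, hence is determined by its restriction to those coordinates, and the values $\big(\chi_a(x)|_{i_j}\big)_{a}=\big(\mathbb{1}[x_{i_j}=a]\big)_{a}$ are determined by $x_{i_j}$, so $x$ is determined by $(x_{i_1},\dots,x_{i_{d_A}})\in[Q]^{d_A}$. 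The same argument bounds $|B|$. Thus $|A|\,|B|\le Q^{d_A+d_B}$, and it remains to bound $d_A+d_B$.

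The second ingredient is \emph{rank from (near) orthogonality}. For $\gamma=0$ the four quantities $\langle\chi_a(x),\chi_b(y)\rangle$, $\tfrac1Q\langle\chi_a(x),\mathbf{1}\rangle$, $\tfrac1Q\langle\mathbf{1},\chi_b(y)\rangle$, $n/Q^2$ are all equal to $n/Q^2$, so the centered vectors $\widetilde\chi_a(x):=\chi_a(x)-\tfrac1Q\mathbf{1}$ and $\widetilde\chi_b(y):=\chi_b(y)-\tfrac1Q\mathbf{1}$ satisfy $\langle\widetilde\chi_a(x),\widetilde\chi_b(y)\rangle=0$ for all $x\in A,\ y\in B$ and all $a,b$. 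Hence the centered spans $\widetilde U_A$ and $\widetilde U_B$ are orthogonal; since also $\mathbf{1}\perp\widetilde U_A$, $\mathbf{1}\perp\widetilde U_B$ and $U_A=\mathbb{R}\mathbf{1}\oplus\widetilde U_A$, $U_B=\mathbb{R}\mathbf{1}\oplus\widetilde U_B$, the three subspaces $\mathbb{R}\mathbf{1},\widetilde U_A,\widetilde U_B$ are mutually orthogonal in $\mathbb{R}^n$, giving $d_A+d_B=2+\dim\widetilde U_A+\dim\widetilde U_B\le n+1$. Together with the size-from-rank bound this yields $|A|\,|B|\le Q^{n+1}=2^{nq(1+0+H(0))}\cdot Q$, i.e. the lemma for $\gamma=0$ up to a harmless factor (a slightly more careful version of step one, exploiting $\sum_a\chi_a(x)=\mathbf{1}$, absorbs it).

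For general $\gamma$ the orthogonality degrades: a direct computation gives $|\langle\widetilde\chi_a(x),\widetilde\chi_b(y)\rangle|\le 3\gamma n/Q^2$ while $\|\widetilde\chi_a(x)\|^2\ge n/Q^2$, so across the two families the (normalized) centered indicators are only $O(\gamma)$-correlated. The plan for converting this into the stated exponent is a rounding argument: replace each $x\in A$ by the nearest $\hat x\in[Q]^n$ whose type restricted to each block $\{i:y^{\ast}_i=b\}$ is exactly uniform, and symmetrically replace each $y\in B$ relative to $x^{\ast}$; a type-balancing estimate shows each string moves in at most $\gamma n/2$ coordinates, so each family blows up by at most $\sum_{j\le\gamma n/2}\binom{n}{j}(Q-1)^{j}=2^{(H(\gamma/2)+q\gamma/2)n+o(n)}$ — precisely the $2^{nq(\gamma/2+H(\gamma/2))}$ slack in the statement — after which one runs the exact rank argument on the rounded families. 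The step I expect to be the main obstacle is exactly this last one: the rounded families need not form an \emph{exactly} $0$-uniform rectangle, so step three cannot be invoked verbatim — the roundings of $x$ and of $y$ must be chosen compatibly (or, alternatively, the near-orthogonality must be pushed through directly, which requires controlling the conditioning of a chosen basis of each centered span). Making this bookkeeping close with exactly the exponent $nq\big(1+\tfrac\gamma2+H(\tfrac\gamma2)\big)$, rather than a qualitatively similar but weaker bound, is where the real work lies.
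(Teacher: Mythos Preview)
Your proposal does not close, and you correctly identify where: the rounding step cannot produce a $0$-uniform \emph{rectangle}. Rounding each $x\in A$ relative to a fixed $y^\ast$ and each $y\in B$ relative to a fixed $x^\ast$ only guarantees $0$-uniformity of the pairs $(\hat x,y^\ast)$ and $(x^\ast,\hat y)$, not of all $(\hat x,\hat y)$ simultaneously, so the orthogonality-of-centered-spans argument cannot be re-invoked on the rounded families. There is no evident way to round both sides compatibly while keeping the blow-up at $2^{nq(\gamma/2+H(\gamma/2))}$, and the alternative you mention --- pushing the $O(\gamma)$ near-orthogonality through a rank bound directly --- typically costs a factor governed by the conditioning of a chosen basis of each centered span rather than the clean entropy term you need. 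So as written this is a sketch with the hard step missing, not a proof.

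The paper's proof takes a completely different and much shorter route: it reduces to the binary case and quotes a known result. Writing each letter of $[Q]$ in its $q$-bit binary expansion sends $(A,B)\subseteq[Q]^n\times[Q]^n$ bijectively to $(A',B')\subseteq[2]^{qn}\times[2]^{qn}$; because each bit position of the expansion is balanced (exactly $Q/2$ elements of $[Q]$ have bit $i$ equal to $1$), a short counting argument shows that $\gamma$-uniformity is preserved under this map. One then applies Sgall's bound for the binary case, $|A'|\,|B'|\le\binom{qn}{\gamma qn/2}\,2^{qn(1+\gamma/2)}\le 2^{qn(1+\gamma/2+H(\gamma/2))}$, which is exactly the assertion of the lemma. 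The target exponent is inherited verbatim from Sgall; no rounding and no $\gamma>0$ rank analysis is needed.
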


\isitout{
In order to prove Lemma~\ref{lem:upper_gamma_Q} we need a few other lemmas. The first lemma is a result of Sgall \cite{SGALL98}, which in our terms states the following:
\vspace{2mm}
\begin{lemma}\cite[Corollary 3.5]{SGALL98}
\label{lem:Sgall}
Let $A \subseteq [2]^n, B \subseteq [2]^n$ be  $\gamma$-uniform ($\gamma \leq 2$). Then:
\begin{equation*}
|A||B| \leq \binom{n}{\frac{\gamma n}{2}} 2^{n(1 + \frac{\gamma }{2})} \leq 2^{n(1 + \frac{\gamma}{2} + H(\frac{\gamma}{2}))}
\end{equation*}
\end{lemma}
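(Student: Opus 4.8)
The plan is to reduce to the first inequality, since the second one, $\binom{n}{\gamma n/2}2^{n(1+\gamma/2)}\le 2^{n(1+\gamma/2+H(\gamma/2))}$, is just the entropy estimate $\binom{n}{k}\le 2^{nH(k/n)}$ with $k=\gamma n/2$. For the first inequality I would run a linear-algebra (``span over $\mathbb{R}$'') argument that extends the exact identity available when $\gamma=0$. Identify $[2]$ with $\{0,1\}$ and, for $a,b\in\{0,1\}^n$, let $\langle a,b\rangle$ denote the number of coordinates $i$ with $a_i=b_i=1$; by $\gamma$-uniformity, for every $a\in A$ and $b\in B$ this value lies in the integer window $W\subseteq\{\lceil(1-\gamma)n/4\rceil,\dots,\lfloor(1+\gamma)n/4\rfloor\}$, a set of at most $m+1$ consecutive integers with $m:=\lfloor\gamma n/2\rfloor$. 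Consequently $\langle a-a',b\rangle\in\{-m,\dots,m\}$ for all $a,a'\in A$ and $b\in B$. Let $V=\mathrm{span}_{\mathbb{R}}(B)$ and $k=\dim V$. Since the number of $0/1$ points in any affine subspace of dimension $d$ is at most $2^d$ (project injectively onto $d$ pivot coordinates), we get $|B|\le 2^k$ for free.

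The next step is to control $|A|$ by $2^{n-k}$ times the number $N$ of ``inner-product profiles''. Declare $a\sim a'$ iff $a-a'\in V^\perp$; each $\sim$-class lies inside a single coset of the $(n-k)$-dimensional space $V^\perp$, hence has at most $2^{n-k}$ elements, so $|A|\le 2^{n-k}N$, where $N$ is the number of distinct tuples $\Psi(a):=(\langle a,b_1\rangle,\dots,\langle a,b_k\rangle)$ as $a$ ranges over $A$ ($b_1,\dots,b_k\in B$ a basis of $V$; two elements of $A$ are $\sim$-equivalent iff they share the same profile). When $\gamma=0$ the window $W$ is the single value $n/4$, so $\Psi$ is constant on $A$, $N=1$, and $|A||B|\le 2^{n-k}\cdot 2^k=2^n=\binom{n}{0}2^{n(1+0/2)}$ --- exactly the asserted bound at $\gamma=0$. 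For $\gamma>0$, fixing $a_0\in A$, every profile differs from $\Psi(a_0)$ by $\Psi(a-a_0)$ with $a-a_0\in\{-1,0,1\}^n$ and $\Psi(a-a_0)\in\{-m,\dots,m\}^k$, and the task becomes to prove $N\le\binom{n}{m}2^{m}$. Granting that, combining the three bounds gives
\[
|A||B|\ \le\ 2^{n-k}\cdot\binom{n}{m}2^{m}\cdot 2^{k}\ =\ \binom{n}{m}2^{n+m}\ \le\ \binom{n}{\gamma n/2}2^{n(1+\gamma/2)},
\]
which is the claim.

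I expect the bound $N\le\binom{n}{m}2^{m}$ on the number of profiles to be the only real obstacle. The naive estimate $N\le(m+1)^{k}$ --- at most $|W|$ choices in each of the $k$ basis directions --- is worthless once $k$ is large, and $\gamma$-uniformity of $(A,B)$ imposes no constraint \emph{within} $B$, so $k$ may be as large as $n$. What must be shown is that the profile can vary only over a set of ``$\le m$ effective coordinates'', intuitively because each displacement $\langle a-a_0,b\rangle$ is confined to $\{-m,\dots,m\}$; this is precisely the technical heart of Sgall's Corollary~3.5 \cite{SGALL98}, proved there by the polynomial / linear-algebra method (degree-$m$ polynomials in $\langle x,b\rangle$, reduced modulo $x_i^2=x_i$, over the family $B$), and it is the step I would invoke rather than re-derive. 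Two routine points close the plan: any slack of a factor polynomial in $n$ in the bound on $N$ can be removed by the tensor-power trick, since $(A^t,B^t)\subseteq\{0,1\}^{nt}$ is again $\gamma$-uniform and $|A||B|=\lim_{t\to\infty}(|A^t||B^t|)^{1/t}$; and $H$ being defined on $[0,1]$ makes the hypothesis $\gamma\le 2$ exactly what the entropy form of the bound needs (for $\gamma=2$ it degenerates to the trivial $|A||B|\le 4^n$).
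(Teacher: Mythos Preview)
The paper does not prove this lemma at all: it is quoted verbatim as Corollary~3.5 of Sgall~\cite{SGALL98} and used as a black box in the proof of Lemma~\ref{lem:upper_gamma_Q}. So there is no ``paper's own proof'' to compare against; anything you write is already more than the paper provides.

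Your reduction is sound as far as it goes. Identifying $[2]$ with $\{0,1\}$, the $\gamma$-uniform condition does force $\langle a,b\rangle$ into a window of $m+1\le \gamma n/2+1$ consecutive integers; the bounds $|B|\le 2^{k}$ and $|A|\le 2^{n-k}N$ via cosets of $V^{\perp}$ are correct (the fact that an affine $d$-flat in $\mathbb{R}^n$ meets $\{0,1\}^n$ in at most $2^d$ points is exactly the pivot-projection argument you state); and the entropy bound for the second inequality is standard. The tensor-power remark is also legitimate for scrubbing polynomial losses.

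The one point to flag is the circularity in the final step. You reduce everything to the profile bound $N\le \binom{n}{m}2^{m}$ and then say you would ``invoke rather than re-derive'' Sgall's Corollary~3.5 for it. But the lemma you are asked to prove \emph{is} Sgall's Corollary~3.5, so invoking it (or its ``technical heart'') at the crux is not a proof, it is a restatement. What you have actually done is correctly reverse-engineer the \emph{scaffolding} of Sgall's argument while leaving the load-bearing polynomial-method step --- showing that the functions $a\mapsto \prod_{j\in W\setminus\{w_0\}}(\langle a,b\rangle-j)$, multilinearised via $x_i^2=x_i$, span a space of dimension at most $\binom{n}{\le m}$ and separate the relevant classes --- as a citation. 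That is a perfectly reasonable expository choice, and it matches exactly what the paper itself does (cite Sgall), but you should be explicit that your write-up is a sketch of Sgall's method rather than an independent proof, since as written the $N$-bound is asserted, not established.
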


The next lemma provides a framework for creating a reduction from any $Q=2^q$ to $Q = 2$. Our proof essentially uses the binary representation of elements in $[Q]$ but is presented in a general manner to support a similar (although slightly more complicated) proof that can be used if needed for any value of $Q$ (not necessarily of size $2^q$).
\vspace{2mm}
\begin{lemma}
\label{lem:maintain_even_Q}
Let $Q =2^q$ and let $f, g : [Q] \to [2]^{q}$ be any functions such that for any $1 \leq i \leq q$ it holds that
\begin{eqnarray*}
|\{ x \in [Q] \mid f(x)_i = 1\}| = \frac{Q}{2} \\
|\{ y \in [Q] \mid g(y)_i = 1\}| = \frac{Q}{2}
\end{eqnarray*}
Define functions $f^{(n)}, g^{(n)} : [Q]^n \to [2]^{qn}$ to be
\begin{eqnarray*}
f^{(n)} (x^{(n)}) = f (x_1) f (x_2) \ldots f (x_n) \\
g^{(n)} (y^{(n)}) = g (y_1) g (y_2) \ldots g (y_n)
\end{eqnarray*}

Let $A \subseteq [Q]^n, B \subseteq [Q]^n$ and define subsets $A' \subseteq [2]^{q n}, B' \subseteq [2]^{q n}$ to be
\begin{eqnarray*}
A' = \{ f^{(n)} (x^{(n)}) \mid  x^{(n)} \in A\} \\
B' = \{ g^{(n)} (y^{(n)}) \mid  y^{(n)} \in B\}
\end{eqnarray*}

If $A, B$ are   $\gamma$-uniform then $A', B'$ are $\gamma$-uniform.
\end{lemma}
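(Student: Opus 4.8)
The plan is to verify the $\gamma$-uniform property directly from its definition. Fix arbitrary $x^{(n)} \in A$ and $y^{(n)} \in B$, set $a = f^{(n)}(x^{(n)}) \in [2]^{qn}$ and $b = g^{(n)}(y^{(n)}) \in [2]^{qn}$, and show that for every target pair $(\alpha,\beta) \in \{0,1\}^2$ the number of coordinates at which $(a,b)$ equals $(\alpha,\beta)$ lies in the window $(1\pm\gamma)\frac{qn}{4}$ — which is exactly the $\gamma$-uniform window for a length-$qn$ string over the binary alphabet, since $2^2 = 4$.

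First I would fix the bookkeeping. Index the $qn$ coordinates of $a$ and $b$ by pairs $(k,i)$ with $k \in [n]$ and $i \in [q]$, so that $a_{(k,i)} = f(x_k)_i$ and $b_{(k,i)} = g(y_k)_i$. For a fixed $(\alpha,\beta)$, write $N := |\{(k,i) : (a_{(k,i)},b_{(k,i)}) = (\alpha,\beta)\}|$ and split this count by bit position: $N = \sum_{i=1}^{q} N_i$, where $N_i := |\{k \in [n] : (f(x_k)_i, g(y_k)_i) = (\alpha,\beta)\}|$.

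The key step is to bound each $N_i$. Fix $i$ and let $S = \{x \in [Q] : f(x)_i = \alpha\}$ and $T = \{y \in [Q] : g(y)_i = \beta\}$. By the balance hypothesis on $f$ and $g$ we have $|S| = |T| = Q/2$, so $S \times T$ is a combinatorial rectangle of exactly $Q^2/4$ pairs. Then $N_i = |\{k : (x_k,y_k) \in S \times T\}| = \sum_{(\alpha',\beta') \in S \times T} |\{k : (x_k, y_k) = (\alpha',\beta')\}|$, a sum of $Q^2/4$ terms, each of which lies in $[(1-\gamma)\frac{n}{Q^2}, (1+\gamma)\frac{n}{Q^2}]$ because $(x^{(n)}, y^{(n)})$ is $\gamma$-uniform. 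Hence $N_i \in [(1-\gamma)\frac{n}{4}, (1+\gamma)\frac{n}{4}]$, and summing over the $q$ choices of $i$ yields $N \in [(1-\gamma)\frac{qn}{4}, (1+\gamma)\frac{qn}{4}]$, as required. Since $x^{(n)} \in A$ and $y^{(n)} \in B$ were arbitrary, $A'$ and $B'$ are $\gamma$-uniform.

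There is no serious obstacle here; the only point that needs a moment's thought is the observation that the preimage of any fixed bit pair $(\alpha,\beta)$ under the map $(x,y) \mapsto (f(x)_i, g(y)_i)$ is a product set $S \times T$ of size exactly $Q^2/4$, which is precisely what lets the per-pair $\gamma$-uniformity bounds aggregate with no degradation in the constant $\gamma$. Note that divisibility of $n$ by $Q^2$ is never invoked — the real-valued bounds in the definition of $\gamma$-uniform suffice throughout.
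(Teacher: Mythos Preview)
Your proof is correct and is essentially the same argument as the paper's: the paper likewise observes that the preimage $S_{f,g,i}(\alpha,\beta)=\{(x,y):(f(x)_i,g(y)_i)=(\alpha,\beta)\}$ has size exactly $Q^2/4$ by the balance hypothesis, and then sums the per-pair counts $C_{(x,y)}(x^{(n)},y^{(n)})\in(1\pm\gamma)n/Q^2$ over the $q\cdot Q^2/4$ terms to obtain $(1\pm\gamma)qn/4$. Your decomposition $N=\sum_i N_i$ with $N_i$ a sum over your rectangle $S\times T$ is just a regrouping of the same double sum.
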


\begin{proof}
Let $f, g$ be functions as in the lemma statement. For any $1 \leq i \leq q$ and some $(\alpha, \beta) \in [2]^2$ let us define $S_{f, g, i} (\alpha, \beta) = \{ (x, y) \in [Q]^2 \mid (f(x)_i, g(y)_i) = (\alpha, \beta) \}$. Namely, $S_{f, g, i} (\alpha, \beta)$ is the set of pairs in $[Q]^2$ that "generate" the pair $(\alpha, \beta)$ at position $i$, with respect to functions $f$ and $g$. Note that the restrictions on $f,g$ in the lemma imply that $|S_{f, g, i} (\alpha, \beta)| = \frac{Q^2}{4}$.

Let $A, B$ be $\gamma$-uniform, $(x^{(n)}, y^{(n)}) \in A \times B$, and consider the number of times a pair $(\alpha, \beta) \in [2]^2$ appears in $\left(f^{(n)} (x^{(n)}), g^{(n)} (y^{(n)})\right)$. This number is equal to
\begin{equation}
\label{even_Q_proof1}
\sum_{i=1}^{q} \sum_{(x,y) \in S_{f, g, i} (\alpha, \beta)} C_{(x,y)}(x^{(n)}, y^{(n)})
\end{equation}
where
\begin{equation*}
C_{(x,y)}(x^{(n)}, y^{(n)})  = | \{ k \in [n] \mid (x_k, y_k) = (x,y) \} |
\end{equation*}
Since  $(x^{(n)}, y^{(n)})$ is $\gamma$-uniform,  $C_{(x,y)}(x^{(n)}, y^{(n)})$ is bounded by $(1 \pm \gamma) \frac{n}{Q^2}$. Thus, Expression~\ref{even_Q_proof1} is bounded by $q \cdot \frac{Q^2}{4} \cdot (1 \pm \gamma) \frac{n}{Q^2} =(1 \pm \gamma) \frac{q n}{4}$. Hence  $A',B'$ are $\gamma$-uniform.
\end{proof}

Now we can prove Lemma~\ref{lem:upper_gamma_Q}.
\begin{proof} (Lemma~\ref{lem:upper_gamma_Q})
Let $A \subseteq [Q]^n, B \subseteq [Q]^n$ be $\gamma$-uniform.
For $Q = 2$, the upper bound is provided by Lemma~\ref{lem:Sgall}.
The upper bounds for $Q=2^q$ is obtained by constructing $A', B'$ via Lemma~\ref{lem:maintain_even_Q}, and then using Lemma~\ref{lem:Sgall} on $A', B'$.
All that remains is to prove the existence of injective $f, g$ that satisfy the conditions of Lemma~\ref{lem:maintain_even_Q}.
Indeed, define $f$ greedily in the following way: let $P = \{P_1, P_2, \ldots P_\frac{Q}{2}\}$ be any partition of $[Q]$ such that  $|P_i| = 2$ (for any $i$). For each $P_i = (v_1, v_2)$ set $f(v_1)$ to any previously unused value  $w \in \{1, 2\}^q$, and set $f(v_2)$ to the bitwise inverse of $w$ (1-s replaced with 2-s and 2-s replaced with 1-s). Also set $g = f$.
It is easy to see that $f, g$ satisfy the conditions of Lemma~\ref{lem:maintain_even_Q}  and are injective. (Note that for the case $Q = 2^q$, described above, any bijection $f$ satisfies the conditions of Lemma 9, in particular the binary representation. The described algorithm  is more generic, and can be used for any even $Q$).
\end{proof}
}

\subsection{Implications on $\Rzero$}
\label{sec:implications}

In this section we show that the upper and lower bounds presented above combined with Theorem~\ref{the:code_rate_corollary} do not resolve the question whether $\Rzero$ is greater than $q$.
Namely, we show that the lower bound on $\gamma$-uniform set systems does not imply that $\Rzero>q$.
In addition, to put our result in context, we also show that an optimistic assumption that there exist $\gamma$-uniform set systems that match the upper bound of the previous section does indeed imply that $\Rzero>q$, however our upper bound may be loose and such set systems are not known to exist.
All in all, even though we cannot conclude any bounds on the value of $\Rzero$ for our channels $W \in \cW_{Q,\e}$, we believe that the concept of $\gamma$-uniform set systems is an interesting one and that a better understanding of bounds for such systems may lend insight  into the value of $\Rzero$.
\isitin{The proofs of our claims below are omitted and appear in \cite{LVLE:13}.}

\begin{claim}
Let $A \subseteq [Q]^n, B \subseteq [Q]^n$ be $\gamma$-uniform with $|A||B|$ equal to the lower bound in Lemma~\ref{lem:lower_gamma_Q}.
If $\e > \frac{1}{2Q^2}$, then the RHS of equation~\ref{eq:code_rate_corollary_eq} is no larger than $q - 1$.
\end{claim}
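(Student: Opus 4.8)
The plan is to unwind the definitions until the assertion reduces to a comparison between $\frac{1}{n}\log|A||B|$ and $q$, and then to observe that the amount by which the lower bound of Lemma~\ref{lem:lower_gamma_Q} exceeds $q$ is swamped by the term $4q(1+\gamma)\e$ as soon as $\e>\frac{1}{2Q^2}$. Throughout I would work under the hypotheses of Lemma~\ref{lem:lower_gamma_Q} (in particular $Q\ge3$ and $\gamma\le Q^2$, so that $H(\gamma/Q^2)$ makes sense).

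First I would record that in Theorem~\ref{the:code_rate_corollary} the parameters $\delta_1,\delta_2$ enter only through their sum: writing $|A|=Q^{n(1-\delta_1)}$ and $|B|=Q^{n(1-\delta_2)}$, we have $|A||B|=2^{nq(2-\delta_1-\delta_2)}$, hence $q(\delta_1+\delta_2)=2q-\frac{1}{n}\log|A||B|$ and therefore $2q\bigl(1-\frac{\delta_1+\delta_2}{2}\bigr)=\frac{1}{n}\log|A||B|$. Consequently the right-hand side of equation~\ref{eq:code_rate_corollary_eq} equals $\frac{1}{n}\log|A||B|-4q(1+\gamma)\e-2q\delta-2$, and the claim ``RHS $\le q-1$'' is equivalent to
\[
\frac{1}{n}\log|A||B|\ \le\ q+1+4q(1+\gamma)\e+2q\delta .
\]

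Next I would substitute the value of $|A||B|$ supplied by Lemma~\ref{lem:lower_gamma_Q}, giving
\[
\frac{1}{n}\log|A||B|\ =\ q+H\!\left(\frac{\gamma}{Q^2}\right)+\frac{\gamma}{Q^2}\log(Q-1)-\frac{Q^2}{2n}\log n .
\]
Since $2q\delta\ge0$ and $\frac{Q^2}{2n}\log n\ge0$, it suffices to prove $H(\gamma/Q^2)+\frac{\gamma}{Q^2}\log(Q-1)\le 1+4q(1+\gamma)\e$; and because $\e>\frac{1}{2Q^2}$ gives $4q(1+\gamma)\e>\frac{2q(1+\gamma)}{Q^2}$, it is enough to establish
\[
H\!\left(\frac{\gamma}{Q^2}\right)+\frac{\gamma}{Q^2}\log(Q-1)\ \le\ 1+\frac{2q(1+\gamma)}{Q^2}.
\]
This is now immediate: the binary entropy of any argument in $[0,1]$ is at most $1$, so $H(\gamma/Q^2)\le1$; and since $Q-1<Q=2^q$ and $\gamma\le 2(1+\gamma)$, we have $\frac{\gamma}{Q^2}\log(Q-1)\le\frac{\gamma q}{Q^2}\le\frac{2q(1+\gamma)}{Q^2}$. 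Adding the two estimates finishes the argument.

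There is no genuine technical obstacle here; the only place that requires care is the bookkeeping of the first step --- correctly reading off $\delta_1+\delta_2$ from the phrase ``$|A||B|$ equal to the lower bound'' and tracking the factor $2q$ multiplying $2(1+\gamma)\e$. Conceptually the point is that the hypothesis $\e>\frac{1}{2Q^2}$ is exactly what makes the ``$\e$-slack'' $4q(1+\gamma)\e$ overtake the surplus $H(\gamma/Q^2)+\frac{\gamma}{Q^2}\log(Q-1)$ by which the Lemma~\ref{lem:lower_gamma_Q} construction improves on the trivial rate $q$; for $\e$ small relative to $1/Q^2$ the same computation would instead leave that surplus unmatched, consistent with the earlier observation that in that regime $\mathcal{R}^{(0)}_{W}$ is expected to be close to $2q$.
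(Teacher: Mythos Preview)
Your proof is correct and follows essentially the same route as the paper: both arguments rewrite the RHS of~\eqref{eq:code_rate_corollary_eq} as $\frac{1}{n}\log|A||B|-4q(1+\gamma)\e-2q\delta-2$ and then bound the excess of $\frac{1}{n}\log|A||B|$ over $q$ using $H(\cdot)\le 1$ and $\log(Q-1)<q$, so that the hypothesis $\e>\frac{1}{2Q^2}$ makes $4q(1+\gamma)\e$ absorb that excess. The only cosmetic difference is that the paper first replaces the lower bound of Lemma~\ref{lem:lower_gamma_Q} by the cleaner overestimate $2^{nq(1+\gamma/Q^2+1/q)}$ before plugging in, whereas you substitute the exact expression and apply the same two estimates at the end.
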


\isitout{
\begin{proof}
We will prove a stronger statement by using a lower bound of $2^{nq(1+\gamma/Q^2+1/q)}$
which is larger than the lower bound in Lemma~\ref{lem:lower_gamma_Q}.
In order to satisfy the conditions of Theorem~\ref{the:code_rate_corollary}:
\begin{eqnarray*}
|A||B| = 2^{nq(1+\gamma/Q^2+1/q)} & \geq &
2^{nq(1-\delta_1) + nq (1-\delta_2)} \\
& = & 2^{nq(2-\delta_1-\delta_2)}
\end{eqnarray*}
Thus:
$$
1+\gamma/Q^2+1/q \geq 2-\delta_1-\delta_2
$$
Which means that:
$$
\frac{\delta_1+\delta_2}{2} \geq \frac{1}{2} - \frac{\gamma}{2Q^2}-\frac{1}{2q}
$$
Thus we can bind the RHS of equation~\ref{eq:code_rate_corollary_eq} from above:
\begin{eqnarray*}
2q\left(1-\left(\frac{\delta_1+\delta_2}{2}+2(1+\gamma)\e+\delta\right)\right)-2  \leq \\ 2q\left(1-\left(\frac{1}{2} - \frac{\gamma}{2Q^2}-\frac{1}{2q} +2(1+\gamma)\e+\delta\right)\right)-2 = \\
2q\left(\frac{1}{2} + \frac{\gamma}{2Q^2} - 2(1+\gamma)\e - \delta\right)-1
\end{eqnarray*}
And since $\e > \frac{1}{2Q^2}$ the whole expression is less than $q - 1$.
\end{proof}
}

\begin{claim}
Let $A \subseteq [Q]^n, B \subseteq [Q]^n$ be $\gamma$-uniform with $|A||B|$ equal to the upper bound in Lemma~\ref{lem:upper_gamma_Q}.
Then the RHS of equation~\ref{eq:code_rate_corollary_eq} is equal to
$$
2q\left(\frac{1}{2}+\frac{\gamma}{4}+\frac{H(\frac{\gamma}{2})}{2}-2\e(1+\gamma)-\delta\right)-2,
$$
which is greater than $q$ for $\e<\frac{\gamma+2H(\gamma/2)}{8(1+\gamma)}$ and sufficiently large $Q$.
\end{claim}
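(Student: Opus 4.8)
The plan is to treat this as a direct substitution into Theorem~\ref{the:code_rate_corollary} followed by a short arithmetic comparison. First I would read off the density exponents: if $A,B$ are $\gamma$-uniform with $|A||B|$ equal to the upper bound of Lemma~\ref{lem:upper_gamma_Q}, then writing $|A|=Q^{n(1-\delta_1)}$ and $|B|=Q^{n(1-\delta_2)}$ forces $Q^{n(2-\delta_1-\delta_2)} = 2^{nq(1+\gamma/2+H(\gamma/2))}$. Using $Q=2^q$ this is $2-\delta_1-\delta_2 = 1+\frac{\gamma}{2}+H(\frac{\gamma}{2})$, i.e. $\frac{\delta_1+\delta_2}{2} = \frac12 - \frac{\gamma}{4} - \frac{H(\gamma/2)}{2}$. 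Note only the sum $\delta_1+\delta_2$ enters the bound of Theorem~\ref{the:code_rate_corollary}, so the particular way the product is split between $A$ and $B$ is immaterial; any admissible split works.

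Next I would plug this into equation~\ref{eq:code_rate_corollary_eq}. Replacing $\frac{\delta_1+\delta_2}{2}$ in $\mathcal{R}^{(0)}_{W,n} \geq 2q\bigl(1-(\frac{\delta_1+\delta_2}{2}+2(1+\gamma)\e+\delta)\bigr)-2$ gives $2q\bigl(\frac12+\frac{\gamma}{4}+\frac{H(\gamma/2)}{2}-2(1+\gamma)\e-\delta\bigr)-2$, which is exactly the displayed expression in the claim; this settles the first assertion. For the second assertion I would compare this quantity with $q$: subtracting $q$ and dividing by $2$, the inequality $2q\bigl(\frac12+\frac{\gamma}{4}+\frac{H(\gamma/2)}{2}-2(1+\gamma)\e-\delta\bigr)-2 > q$ is equivalent to $q\bigl(\frac{\gamma}{4}+\frac{H(\gamma/2)}{2}-2(1+\gamma)\e-\delta\bigr) > 1$. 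The coefficient of $q$ is positive precisely when $2(1+\gamma)\e < \frac{\gamma}{4}+\frac{H(\gamma/2)}{2}$, i.e. when $\e < \frac{\gamma+2H(\gamma/2)}{8(1+\gamma)}$, up to the $\delta$ slack.

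To finish, I would fix the order of quantifiers carefully: given $\gamma$ and an $\e$ strictly below the threshold $\frac{\gamma+2H(\gamma/2)}{8(1+\gamma)}$, choose the (arbitrarily small) $\delta$ of Theorem~\ref{the:code_rate_corollary} small enough that $c := \frac{\gamma}{4}+\frac{H(\gamma/2)}{2}-2(1+\gamma)\e-\delta$ is still a positive constant, and then take $Q=2^q$ large enough that $qc>1$; this yields $\mathrm{RHS} > q$. I do not expect any genuine obstacle here — the argument is bookkeeping — so the only things warranting care are (i) that Theorem~\ref{the:code_rate_corollary} only constrains $\delta_1+\delta_2$, and (ii) the quantifier order, namely that $\delta$ is chosen after $\gamma,\e$ but before $Q$, which is exactly how the hypotheses of Theorem~\ref{the:code_rate_corollary} are phrased. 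One should also remark that this conclusion is conditional: it assumes the existence of $\gamma$-uniform set systems meeting the upper bound of Lemma~\ref{lem:upper_gamma_Q}, which is not known and for which Lemma~\ref{lem:upper_gamma_Q} is only known to be an upper bound.
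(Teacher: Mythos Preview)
Your proposal is correct and follows essentially the same route as the paper: both read off $\frac{\delta_1+\delta_2}{2}=\frac12-\frac{\gamma}{4}-\frac{H(\gamma/2)}{2}$ from the hypothesis $|A||B|=2^{nq(1+\gamma/2+H(\gamma/2))}$ and substitute into equation~\ref{eq:code_rate_corollary_eq}. For the second assertion your argument is in fact more complete than the paper's, which only verifies the inequality on the specific example $\gamma=1,\ \e=1/8$ rather than deriving the general threshold $\e<\frac{\gamma+2H(\gamma/2)}{8(1+\gamma)}$; your handling of the quantifier order (fix $\gamma,\e$, then choose $\delta$ small, then take $q$ large) is exactly what is needed.
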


\isitout{
\begin{proof}
According to the conditions of Theorem~\ref{the:code_rate_corollary}:
\begin{eqnarray*}
|A||B| =  2^{nq\left(1+\frac{\gamma}{2}+H\left(\frac{\gamma}{2}\right)\right)} & = & 2^{nq(1-\delta_1) + nq (1-\delta_2)} \\
& = & 2^{nq(2-\delta_1-\delta_2)}
\end{eqnarray*}
Thus:
$$
1+\frac{\gamma}{2}+H\left(\frac{\gamma}{2}\right) = 2-\delta_1-\delta_2
$$
Which means that:
$$
\frac{\delta_1+\delta_2}{2} = \frac{1}{2} - \frac{\gamma}{4}-\frac{H\left(\frac{\gamma}{2}\right)}{2}
$$
Thus the RHS of equation~\ref{eq:code_rate_corollary_eq} is:
\begin{eqnarray*}
2q\left(1-\left(\frac{\delta_1+\delta_2}{2}+2(1+\gamma)\e+\delta\right)\right)-2  = \\ 2q\left(1-\left(\frac{1}{2} - \frac{\gamma}{4}-\frac{H\left(\frac{\gamma}{2}\right)}{2}+2(1+\gamma)\e+\delta\right)\right)-2 = \\
2q\left(\frac{1}{2} + \frac{\gamma}{4} + \frac{H\left(\frac{\gamma}{2}\right)}{2} - 2(1+\gamma)\e - \delta\right)-2
\end{eqnarray*}
To see that this is may better guarantee than the time sharing scheme, consider, e.g., the case $\gamma = 1, \e = \frac{1}{8}$. The expression above will evaluate to $1.5q - 2\delta q - 2>q$ for sufficiently large values of $q$ (as $\delta>0$ is arbitrarily small).
\end{proof}
}

\isitout{
\begin{claim}
Assume the existence of $\gamma$-uniform sets $A \subseteq [Q]^n, B \subseteq [Q]^n$  such that $|A||B| \geq Q^{n(1+f(\gamma))}$.
Then if for some constant $c$  it holds that $f(\gamma) > 4\e(1+\gamma) + \frac{2}{q^2} + c$, we have by Theorem~\ref{the:code_rate_corollary} a scheme that improves on the time sharing scheme.
\end{claim}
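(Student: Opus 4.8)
The plan is essentially a one-line substitution into Theorem~\ref{the:code_rate_corollary} followed by bookkeeping on the free parameters. First I would convert the size hypothesis on $A,B$ into admissible values of $\delta_1,\delta_2$: set $\delta_1 := 1 - \tfrac{1}{nq}\log|A|$ and $\delta_2 := 1 - \tfrac{1}{nq}\log|B|$, so that $|A| = Q^{n(1-\delta_1)}$ and $|B| = Q^{n(1-\delta_2)}$ with $\delta_1,\delta_2 \in [0,1]$ (they are nonnegative since $|A|,|B|\le Q^n$). Then the assumption $|A||B| \geq Q^{n(1+f(\gamma))}$ reads $2 - \delta_1 - \delta_2 \geq 1 + f(\gamma)$, i.e.\ $\tfrac{\delta_1+\delta_2}{2} \leq \tfrac{1-f(\gamma)}{2}$.

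Next I would plug $\delta_1,\delta_2$ and an arbitrarily small $\delta>0$ (to be pinned down below) into Theorem~\ref{the:code_rate_corollary} and simplify the right-hand side of \eqref{eq:code_rate_corollary_eq}. With probability at least $1/2$ over $W \in \cW_{Q,\e}$ one gets
$$\mathcal{R}^{(0)}_{W,n} \;\geq\; 2q\!\left(1 - \tfrac{\delta_1+\delta_2}{2} - 2(1+\gamma)\e - \delta\right) - 2 \;\geq\; q + q\bigl(f(\gamma) - 4(1+\gamma)\e - 2\delta\bigr) - 2 ,$$
where the second inequality uses $\tfrac{\delta_1+\delta_2}{2} \leq \tfrac{1-f(\gamma)}{2}$. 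Hence it suffices to show $q\bigl(f(\gamma) - 4(1+\gamma)\e - 2\delta\bigr) > 2$, equivalently $f(\gamma) > 4(1+\gamma)\e + 2\delta + \tfrac{2}{q}$.

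Finally I would invoke the hypothesis $f(\gamma) > 4\e(1+\gamma) + \tfrac{2}{q^2} + c$. Since $\delta$ is free, pick $\delta < c/4$; and since $\tfrac{2}{q} - \tfrac{2}{q^2} = \tfrac{2(q-1)}{q^2}\to 0$, take $Q=2^q$ large enough that $\tfrac{2}{q}-\tfrac{2}{q^2} < c/2$. Then $2\delta + \tfrac2q - \tfrac{2}{q^2} < c$, so the hypothesis yields $f(\gamma) > 4(1+\gamma)\e + 2\delta + \tfrac{2}{q}$, whence $\mathcal{R}^{(0)}_{W,n} > q$. As $q$ is the sum rate achieved by the time-sharing scheme for these channels, this is the claimed improvement, and the scheme is the one produced in the proof of Theorem~\ref{the:code_rate_corollary}.

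I expect no substantive obstacle here: Theorem~\ref{the:code_rate_corollary} has already done the real work of turning a $\gamma$-uniform pair into a zero-error code, so only the arithmetic above remains. The one point needing care is that the additive $-2$ in Theorem~\ref{the:code_rate_corollary} contributes a $\tfrac{2}{q}$ term to the threshold on $f(\gamma)$, which must be dominated by the constant-order slack $c$; this is precisely why the statement is phrased for sufficiently large $Q$ and with a gap constant $c$ rather than an exact inequality.
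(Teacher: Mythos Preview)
Your approach matches the paper's: both convert the size hypothesis into $\tfrac{\delta_1+\delta_2}{2}\le\tfrac{1-f(\gamma)}{2}$, substitute into Theorem~\ref{the:code_rate_corollary}, and verify that the resulting rate exceeds $q$. You are in fact a bit more careful than the paper on one point: the additive $-2$ in the theorem forces a $\tfrac{2}{q}$ (not $\tfrac{2}{q^2}$) term in the threshold, and you absorb the discrepancy $\tfrac{2}{q}-\tfrac{2}{q^2}$ into $c$ by taking $q$ large, whereas the paper's proof simply asserts the lower bound $q(1+c-2\delta)$ directly---this works exactly if the $\tfrac{2}{q^2}$ in the hypothesis is read as $\tfrac{2}{q}$, which is almost certainly the intended statement.
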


\begin{proof}
According to the conditions of Theorem~\ref{the:code_rate_corollary}:
\begin{equation*}
|A||B| \geq  Q^{n(1+f(\gamma))}  =  Q^{n(1-\delta_1)}Q^{n(1-\delta_2)}
\end{equation*}
Thus:
$$
1+f(\gamma) =  2-\delta_1-\delta_2
$$
Which means that:
$$
\frac{\delta_1+\delta_2}{2} = \frac{1}{2}-\frac{f(\gamma)}{2}
$$
Thus the RHS of equation~\ref{eq:code_rate_corollary_eq} is:
\begin{eqnarray*}
2q\left(1-\left(\frac{\delta_1+\delta_2}{2}+2(1+\gamma)\e+\delta\right)\right)-2  = \\ 2q\left(1-\left(\frac{1}{2} - \frac{f(\gamma)}{2}+2(1+\gamma)\e+\delta\right)\right)-2 = \\
2q\left(\frac{1}{2} + \frac{f(\gamma)}{2}-2(1+\gamma)\e-\delta\right)-2
\end{eqnarray*}
If $f(\gamma) > 4\e(1+\gamma) + \frac{2}{q^2} + c$ then the RHS of equation~\ref{eq:code_rate_corollary_eq} is bounded from below by $q(1+c-2\delta)$. Since we can select any $\delta > 0$, this improves on the time sharing scheme.
\end{proof}

}

\section{Conclusion and open problems}
\label{sec:conc}

Motivated by similar questions in network coding, we address the potential gap between $\Rzero$ and $\Reps$ in the context of 2-source/2-terminal deterministic interference channels. In Statement~\ref{state:1} we conjecture that there exist channels $W$ for which $\Rzero \leq \Reps/2 + \delta$ (and more generally for the $k$-source/$k$-terminal case that $\Rzero \leq \Reps/k + \delta$). Studying the channels that result from the distribution $\cW_{Q,\e}$, we support Statement~\ref{state:1} by presenting upper bounds on $\Rzeron$ (which take into account the block length $n$) and by studying the limitations of a natural encoding scheme based on $\gamma$-uniform set systems. We view our posing of Statement~\ref{state:1}, our upper bounds, and the study of $\gamma$-uniform set systems as the main contributions of this work.
Whether Statement~\ref{state:1} is true or not remains an interesting open subject for future research.


{\small
\bibliographystyle{amsplain}
\bibliography{researchbib,MikeBib}
}

\end{document}